\newtheorem{thm}{Theorem}[section]
\newtheorem{lem}[thm]{Lemma}
\newtheorem{obs}[thm]{Observation}
\newtheorem{prop}[thm]{Proposition}
\newtheorem{assmp}[thm]{Assumption}
\theoremstyle{definition}
\newtheorem{defn}[thm]{Definition}
\theoremstyle{remark}
\numberwithin{equation}{section}
\newcommand{\R}{{\mathbb R}}
\newcommand{\eps}{\varepsilon}
\def\D{{\mathcal D}}
\def\S{{\mathcal S}}
\def\dtau{d_{\tau}}
\def\dfoot{d_{\operatorname{foot}}}
\def\polylog{\operatorname{polylog}}
\def\one{{\bf 1}}
\def\localimprove{\operatorname{ApproxLocalImprove}}
\def\samplerank{\operatorname{SampleAndDecompose}}
\def\samplerankroot{\operatorname{SampleAndRank}}
\def\ES{{\mathcal E}_1}
\def\Ecostapprox{{\mathcal E}_2}
\def\myphi{\varphi}
\def\negspaceA{} %{\vspace{-2ex}}
\def\negspaceB{} % {\vspace{-3ex}}
\def\negspaceC{} % {\vspace{-1ex}}
\def\negspaceD{} % {\vspace{-1ex}}
\def\negspaceeq{} %{\vspace{-0.5ex}}
\def\neghspace{} % {\hspace{0cm}}
\def\score{\operatorname{score}}
\def\internal{\mathcal I}
\def\leaf{\mathcal L}
\def\root{{\operatorname{RT}}}
\newcommand{\leftc}[1]{L[#1]}
\newcommand{\rightc}[1]{R[#1]}
\def\big{{\mathcal B}}
\begin{document}
\title{An Active Learning Algorithm for Ranking from Pairwise Preferences with an Almost Optimal Query Complexity}
\author{Nir Ailon\thanks{Technion, \texttt{nailon@cs.technion.ac.il}}}
\setcounter{page}{0}
% ----------------------------------------------------------------
\maketitle
\thispagestyle{empty}
\maketitle

\begin{abstract}
We study the problem of learning to rank from pairwise preferences, and solve a long-standing open problem that has led
to development of many heuristics but no provable results for our particular problem.

The setting is as follows:  We are given  a set $V$ of  $n$ elements from some universe, 
and we wish to linearly order them given pairwise preference labels.  given two elements $u,v\in V$, a  pairwise preference label
is obtained as a response, typically from a human, to the question \emph{which if preferred, $u$ or $v$?}
We assume no abstention, hence, either $u$ is preferred to $v$ (denoted $u \prec v$) or the other way around.
We also assume possible non-transitivity paradoxes which may arise naturally due to human mistakes or irrationality.

The goal is to linearly order the elements from the most preferred to the least preferred, while disagreeing with
as few pairwise preference labels as possible.  Our performance is measured by two parameters:  The loss (number
of pairwise preference labes we disagree with) and the query complexity (number of pairwise preference labels we
obtain).  This is a typical learning problem, with the exception that the space from which the pairwise preferences
is drawn is finite, consisting of ${n\choose 2}$ possibilities only. 
Our algorithm reduces this problem to another  problem, for which any standard learning black-box
can be used.  The advantage of the reduced problem compared to the original one is the fact that
never more than $O(n\polylog(n,\eps^{-1}))$ labels are needed (including the query complexity  of the reduction) in order 
to obtain the same risk that the same black-box
would have incurred given access to all possible ${n\choose 2}$ labels in the original problem, up to a multiplicative regret of $(1+\eps)$.
The label sampling is adapative, hence, viewing our algorithm as a preconditioner for a learning black-box
we arrive at an \emph{active learning algorithm with provable, almost optimal bounds}.  We also show that VC arguments
 give significantly worse query complexity bounds
for the same regret in a non-adaptive sampling strategy.

Our main result settles an open problem posed by 
learning-to-rank theoreticians and practitioners:
What is a provably correct way to sample preference labels?

To further  show the power and practicality of our solution, we analyze a typical test case in which the learning black-box preconditioned by our algorithm is a regularized large margin linear classifier.  

\end{abstract}

\newpage
\negspaceB
\section{Introduction}
\negspaceA
We study the problem of learning to rank from pairwise preferences, and solve a long-standing open problem that has led
to development of many heuristics but no provable results.

%Minimum Feedback Arc-Set in Tournaments (MFAST).  
%In this problem, the input is a tournament (a fully-dense
%directed graph)
The setting is as follows:  We are given  a set $V$ of  $n$ elements from some universe, 
and we wish to linearly order them given pairwise preference labels.  given two elements $u,v\in V$, a  pairwise preference label
is obtained as a response, typically from a human, to the question \emph{which if preferred, $u$ or $v$?}
We assume no abstention, hence, either $u$ is preferred to $v$ (denoted $u \prec v$) or the other way around.

The goal is to linearly order the elements from the most preferred to the least preferred, while disagreeing with
as few pairwise preference labels as possible.  Our performance is measured by two parameters:  The loss (number
of pairwise preference labes we disagree with) and the query complexity (number of pairwise preference labels we
obtain).  This is a typical learning problem, with the exception that the sample space is finite, consisting of ${n\choose 2}$ possibilities only. 

The loss minimization problem given the entire $n\times n$ preference matrix is a well known NP-hard
problem called MFAST (minimum feedback arc-set in tournaments) \cite{AlonRankingTournaments}.
% from left to right in such a way that the total weight of the edges going
%from right to left is minimized.  This problem is known to be NP-Hard \cite{AlonRankingTournaments}.
Recently, Kenyon and Schudy  \cite{DBLP:conf/stoc/Kenyon-MathieuS07} have devised a PTAS for it, namely,
a polynomial (in $n$) -time algorithm computing a solution with loss at most $(1+\eps)$ the optimal, for and $\eps>0$ (the
degree of the polynomial may depend on $\eps$).
%We study a variant in which
In our case
each edge from the input graph is given for a unit cost.  %, and computations are assumed to be free.  This
%This is known as the \emph{query complexity} setting, and any nontrivial solution constitutes a \emph{sublinear} algorithm.
%
Our main algorithm is derived from Kenyon et al's algorithm.  Our output, however, is not a solution
to MFAST, but rather a reduction of the original learning problem to a different, simpler one.  The reduced
problem can be solved using any general ERM (empirical risk minimization) black-box.  % without incurring any extra 
%label acquisition cost from the original problem.
%Our main result is a PTAS which is derived from Kenyon et al's algorithm,
%using a very sparse sampling of the input graph.  % in order to achieve similar error guarantees.
%$Our objective is, however, not to solve the MFAST problem
The sampling of preference labels from the original problem is adaptive, hence the combination of our algorithm and any ERM blackbox is an active learning one.
We give examples with an SVM based ERM black-box toward the end, and show that our approach gives rise to a reduced SVM problem which provably approximates the original problem to within any arbitrarily small
error \emph{relative to the original SVM optimal solution}.  The total number of pairwise preference labels acquired in the reduction and in the construction of the reduced SVM is significantly smaller than what a VC-dimension type argument would guarantee.

%Viewed as a machine learning problem, MFAST can be viewed as a supervised learning problem with a finite input distribution consisting
Our setting defers  from much of the  \emph{learning to rank} (LTR) literature.  Usually, the labels used in 
LTR problems are responses to individual elements, and not to pairs of elements. A typical example is
the $1..5$ scale rating for restaurants, or $0,1$ rating (irrelevant/relevant) for candidate documents retrieved
for a query (known as the \emph{binary ranking} problem).  The goal there is, as in ours, to order the elements while disagreeing with as little pairwise
relations as possible, where a pairwise relation is derived from any two elements rated differently.  Note that the underlying preference graph there is transitive, hence no combinatorial problem due to nontransitivity.  In fact, some view the rating setting as
an ordinal regression problem and not a ranking problem.  
%of ${n\choose 2}$ pairs of elements, 
Here the preference graph may contain cycles, and is hence agnostic with respect to the concept class 
we are allowed to output from, namely, permutations.  We note that some LTR literature does consider the pairwise preference
label approach, and there is much justification to it (see \cite{Carterette08hereor,1414238} and reference therein).
As far as we know, our work provides 
 a sound solution to a problem addressed by machine learning practitioners  (e.g. \cite{Carterette08hereor}) who use pairwise preferences as labels
for the task of learning to rank items, but wish to avoid obtaining labels for the  quadratically many preference pairs, without compromising low error bounds.  We also show that the \emph{fear of quadraticity} found in much work dealing with
pairwise preference based learning to rank (e.g., from Crammer et. al \cite{Crammer01prankingwith} \emph{the [pairwise] approach is time consuming since it requires
increasing the sample size ... to $O(n^2)$}) is unfounded in the light of new advances in combinatorial optimization \cite{DBLP:journals/jacm/AilonCN08,DBLP:conf/stoc/Kenyon-MathieuS07}.

%endowed with an inconsistent (agnostic) preference function, in the sense that cycles may be present.
% The learner samples preferences
%and outputs a hypothesis (a permutation), while trying to minimize the loss, defined as the  number of mistakes of the output permutation summed over
%the entire preference space.   The algorithm we present constitutes an active learning algorithm, because the preferences used are
%chosen  adaptively by the learner.

%
%Viewed as a noisy sorting problem, our work continues that of Braverman and Mossel  \cite{BravermanM08} on noisy sorting without resampling.
%Their work presents an algorithm with an $O(n\log n)$ query complexity  for an exact optimal solution (with high probability), and for random input coming from certain
%interesting  distribution families.  Hence, they treat
%the problem in a Bayesian setting.  In our setting the input is worst case, and the  query complexity is $O(n\polylog(n,\eps^{-1}))$ for  error of $(1+\eps)$.\footnote{In the machine learning
% view explained above, our setting would be considered to be in the Empirical Risk Minimization framework.}

It is important to note a significant difference between our work and   Kenyon and Schudy's PTAS \cite{DBLP:conf/stoc/Kenyon-MathieuS07}, which is also
the difference between the combinatorial optimization problem and the learning counterpart.
A good way to explain this  to to compare two learners, Alice and Bob.  On the first day, Bob queries all ${n\choose 2}$ pairwise preference 
labels and sends them to a perfect solver for MFAST. Alice uses our work to query only $O(n\polylog(n,\eps^{-1}))$ preference labels 
amd obtains a decomposition of the original input $V$ into an \emph{ordered list} of sub-problems $V_1,\dots, V_k$ where
each $V_i$ is contained in $V$.  Using the same  optimizer for each part and concatenating the individual output permutations, Alice will incur a loss of at most $(1+\eps)$
that of Bob.  So far Alice might not gain much, because the decomposition may consist of a single block, hence no reduction. The next day, Bob realizes that his MFAST solver cannot deal with large inputs because he is trying to solve an NP-Hard problem.  Also, he seeks a multiplicative regret of $(1+\eps)$ with respect to the optimal solution (we also say a \emph{relative regret} of $\eps$), and  his sought  $\eps$ is too small to use the PTAS.\footnote{The running time of the PTAS is exponential in $\eps^{-1}$.} 
  To remedy this, he takes advantage of the fact that the set $V$ does not merely consist of abstract elements, but rather each $u\in V$ is endowed with a feature vector $\myphi(u)$
and hence each pair of points $u,v$ is endowed with the combined feature vector $(\myphi(u), \myphi(v))$. As in typical
learning, he posits that the order relation between $u,v$ can be deduced from a linear function of $(\myphi(u), \myphi(v))$,
and invokes an optimizer (e.g. SVM) on the relaxed problem, with all pairs as input.  Note that  Bob may try to sample
pairs uniformly to reduce the query complexity (and, perhaps, the running time of the relaxed solver), but as we show below, he will
be discouraged from doing so because in certain realistic cases a relative regret of $\eps$ may entail sampling the entire
pairwise preference space.
Alice uses the same relaxed optimizer, say, SVM.  The labels she sends to the solver consist of a uniform sample of pairs
from each block $V_i$, together with all pairs $u,v$ residing in separate blocks from her aforementioned construction decomposition.  From the former label type
she would need only $O(n\polylog(n,\eps^{-1}))$ many, because (per our decomposition design)  within the blocks the cost of any solution
is high, and hence a  \emph{relative} error is tantamount to an absolute error of similar magnitude, for which
simple VC bounds allow low query complexity.  From the latter label type, she would generate a label for all pairs $u,v$
in distinct $V_i, V_j$, using a "made up" label corresponding to the order of $V_i, V_j$ (recall that the decomposition is ordered).  
Since both Bob and Alice used SVM with the same feature vectors (and  the same regularization), there
is no reason to believe that the additional cost incurred by the relaxation inaccuracies would hurt neither  Bob nor Alice
more than the other. The same statement applies to any relaxation (e.g. decision trees), though we will make 
a quantitative statement for the case of large margin linear classifiers below.

Among other changes to Kenyon and Schudy's algorithm, a key technique is to convert a  highly sensitive greedy improvement step  into a robust approximate one, by
careful sampling.  The main difficulty stems from the fact that after a single greedy improvement step, the sample becomes stale and requires refereshing.
We show a query efficient refreshing technique that allows iterated approximate greedy improvement steps.  %We also need to change parts of their
%algorithm and analysis in order to accommodate for the replacement of sensitive greedy improvement steps, which would be undetectable with an input
%sampling as sparse as ours, with a more robust one.  
Interestingly, their original analysis is amenable to this change.
It is also interesting to note that the sampling scheme used for identifying greedy improvement steps for a current solution are similar
to ideas used by Ailon et. al \cite{DBLP:journals/rsa/AilonCCL07,DBLP:journals/algorithmica/AilonCCL08} and Halevy et. al \cite{DBLP:journals/siamcomp/HalevyK07} in the context
of property testing and reconstruction, where elements are sampled from exponentially growing intervals in a linear order.

%It is also interesting to note 
Ailon et. al's $3$-approximation algorithm for MFAST using QuickSort \cite{DBLP:journals/jacm/AilonCN08} is used in
Kenyon et. al \cite{DBLP:conf/stoc/Kenyon-MathieuS07} as well as here as an initialization step.
Note that this is a sublinear algorithm.
In fact, it samples only $O(n\log n)$ pairs from the ${n\choose 2}$ possible, on expectation.    Note also that the
pairs from which we query the preference relation in QuickSort
are chosen adaptively.
%Here we present and analyze  a $(1+\eps)$-approximation
%by sampling only $O(n\polylog( n, \eps^{-1}))$ pairs, for any fixed $\eps>0$.

%  The main result is in Theorem~\ref{thm:main}.  The algorithm is presented and analyzed in Sections~\ref{sec:active} to~\ref{sec:analysis}.  %Future work is presented in
%in Section~\ref{sec:open}.
% ----------------------------------------------------------------
\negspaceB
\section{Notation and Basic Lemmata}
\negspaceA
\subsection{The Learning Theoretical Problem}\label{sec:learning}
Let $V$ denote a finite set that we wish to rank.  In a more
general setting we are given a sequence $V^1, V^2, \dots$ of sets, but there is enough structure and interest in the single
set case, which we focus on in this work.
%for now we may assume only one set is given.  
Denote by $n$ the  cardinality of $V$.
We assume there is an underlying preference function $W$ on pairs of elements in $V$, which
is unknown to us.  For any ordered pair $u,v\in V$, the preference value $W(u,v)$ takes the value of $1$ if
$u$ is deemed preferred over $v$, and $0$ otherwise.  
We enforce $W(u,v)+W(v,u)=1$, 
%measures the extent to which
%$u$ is prefered over $v$.  In our setting, we assume that $W(u,v)\in[0,1]$ and
%that $W(v,u)=1-W(u,v)$.  
hence, $(V,W)$ is a tournament.  We assume that $W$ is \emph{agnostic} in the sense
that it does not necessarily encode a transitive preference function, and may contain errors and inconsistencies.
For convenience, for any two real numbers
$a,b$ we will let $[a,b]$ denote the interval $\{x: a \leq x \leq b\}$ if $a\leq b$ and $\{x: b \leq x \leq a\}$ otherwise.

\def\H{{\mathcal H}}
Assume now that we wish to predict $W$ using a hypothesis $h$ from some concept class $\H$.
The hypothesis $h$ will take an ordered pair $(u,v)\in V$ as input, and will output  label 
of  $1$ to assert that \emph{$u$ precedes $v$} and $0$ otherwise.  We want $\H$
to contain only consistent hypotheses, satisfying transitivity (i.e. if $h(u,v)=h(v,w)=1$ then $h(u,w)=1$).
A typical way to do this is using a linear score function:  Each $u\in V$ is endowed with a feature vector
$\myphi(u)$ in some RKHS $H$,  a weight vector $w\in H$ is used for parametrizing each $h_w\in \H$, and the
prediction is as follows:\footnote{We assume that $V$is endowed with an arbitrary linear order relation, so we can formally write $u<v$ to  arbitrarily yet consistently break ties.}
$$
 h_w(u,v) = \begin{cases} 1 &  \langle w, \myphi(u)\rangle > \langle w, \myphi(v) \rangle \\
                         0 &  \langle w, \myphi(u)\rangle < \langle w, \myphi(v) \rangle \\
			  \one_{u<v} & \mbox{otherwise}
                         \end{cases}\ .
$$
Our work is relevant, however, to nonlinear hypothesis classes as well.  We denote by $\Pi(V)$ the set permutations
on the set $V$, hence we always assume $\H\subseteq \Pi(V)$. (Permutations $\pi$ are naturally viewed as binary
classifiers of pairs of elements via the preference predicate:  The notation is, $\pi(u,v)=1$ if and only if $u \prec_\pi v$, namely,
if $u$ precedes $v$ in $\pi$.  Slightly abusing notation, we also view permutations as injective functions from $[n]$ to $V$, so that the element
 $\pi(1)\in V$ is in the first, most preferred position and $\pi(n)$ is the least preferred one. 
%position as the most preferred.
\def\rank{\rho}
We also define the function $\rank_\pi$ inverse to $\pi$ as the unique function satisfying $\pi(\rank_\pi(v)) = v$ for all $v\in V$.
Hence,  $u\prec_\pi v$ is equivalent to  $\rank_\pi(u) < \rank_\pi(v)$. )

% to non-transitive hypothesis classes as well, however, it is harmless for the reader to ignore
%this non-typical scenario here.  
As in standard ERM setting, we assume a non-negative risk function $C_{u,v}$ penalizing 
the error of $h$ with respect to the pair $u,v$, namely,
$$ C_{u,v}(h, V, W) = \one_{h(u,v)\neq W(u,v)}\ .$$
The total loss, $C(h, V,W)$ is defined as $C_{u,v}$ summed over all unordered $u,v\in V$.
Our goal is to devise an active learning algorithm for the purpose of minimizing this loss.  

In this paper we find an almost optimal solution to the problem using important breakthroughs in 
combinatorial optimization of a related problem called \emph{minimum feedback arc-set in tournaments} (MFAST).
The relation between this NP-Hard problem and our learning problem  has been noted before \cite{CohenLearningToOrder}, but no provable almost optimal active learning has been devised, as far as we know.

\subsection{The Combinatorial Optimization Counterpart}
MFAST is defined as follows:  Assume we are given $V$ and $W$ and its entirety, in other words, we pay
no price for reading $W$.  The goal is to  order the elemtns of $V$ in a full linear order, 
while minimizing the total pairwise violation.  
More precisely, we wish to find a permutation $\pi$ on the elements of $V$ such that the total
backward cost:
%\begin{equation}\label{eq:MFASTcost} C(\pi,V,W) = {n\choose 2}^{-1} \sum_{u <_\pi v} W(v,u)\ \end{equation}
\begin{equation}\label{eq:MFASTcost} C(\pi,V,W) =  \sum_{u \prec_\pi v} W(v,u) \ \end{equation}
is minimized.  
 The expression in (\ref{eq:MFASTcost}) will be referred to as the \emph{MFAST cost} henceforth.

When $W$ is given as input, this problem is known as the minimum feedback arc-set  in tournaments (MFAST).
A PTAS has been discovered for this NP-Hard very recently \cite{DBLP:conf/stoc/Kenyon-MathieuS07}.
Though a major theoretical achievement from a combinatorial optimization point of view, the PTAS is not useful
for the purpose of \emph{learning to rank from pairwise preferences} because it is not query efficient.  Indeed,
it may require in some cases to read all quadratically many entries in $W$.
In this work we fix this drawback, while using their main ideas for the purpose of machine learning to rank.
We are not interested in MFAST per se, but use the algorithm in \cite{DBLP:conf/stoc/Kenyon-MathieuS07}
to obtain a certain useful decomposition of the input $(V,W)$ from which our main active learning result easily follows.

\begin{defn}\label{def:goodpart}
Given a set $V$ of size $n$, an ordered decomposition is a list of pairwise disjoint subsets $V_1,\dots, V_k\subseteq V$ such
that $\cup_{i=1}^k V_i = V$.  For a given decomposition, we let $W|_{V_i}$ denote the restriction of $W$ to $V_i\times V_i$ for $i=1,\dots, k$. 
Similarly, for a permutation $\pi\in \Pi(v)$ we let $\pi|_{V_i}$ denote the restriction of the permutation to the elements of $V_i$ (hence, $\pi|_{V_i} \in \Pi(V_i)$).   We say that $\pi\in\Pi(V)$ respects $V_1,\dots, V_k$ if for all $u\in V_i, v\in V_j, i<j$, $u \prec_\pi v$.  We denote the set of
permutations $\pi\in\Pi(V)$ respecting the decomposition $V_1,\dots, V_k$ by $\Pi(V_1,\dots, V_k)$.
 We say that a subset $U$ of $V$
is \emph{small in $V$} if $|U| \leq\log n/\log\log n$, otherwise we say that \emph{$U$ is big in $V$}.
A decomposition $V_1,\dots, V_k$ is $\eps$-\emph{good} 
with respect to $W$ if:\footnote{We will just say $\eps$-good if
$W$ is clear from the context.}
\begin{itemize}
\item  Local chaos: % For all $i=1,\dots, k$,
%\begin{equation}\label{eq:localchaos} \min_{\pi \in \Pi(V)} \frac {\sum_{i=1}^k {n_i \choose 2} C(\pi_{|V_i}, V_i, W_{|V_i})}{\sum_{i=1}^k {n_i\choose 2}} \geq \eps\ .\end{equation}
\begin{equation}\label{eq:localchaos} \min_{\pi \in \Pi(V)}{\sum_{i: V_i \mbox{ big in } V} C(\pi_{|V_i}, V_i, W_{|V_i})} \geq \eps^2\sum_{i: V_i\mbox{ big in } V} {n_i \choose 2} \ .\end{equation}
\item Approximate optimality:  
\begin{equation}\label{eq:globalorder} \min_{\sigma\in\Pi(V_1,\dots, V_k)}C(\sigma, V, W) \leq (1+\eps) \min_{\pi \in \Pi(V)} C(\pi, V, W)\ . 
%\begin{equation}\label{eq:globalorder}\frac {\sum_{1\leq i<j\leq k}\sum_{(u,v)\in V_i\times V_j} \one_{W(v,u)=1}}{{n\choose 2}} \leq \eps \min_{\pi \in\Pi(V)} C(\pi, V, W)\ .\end{equation}
%\begin{equation}\label{eq:globalorder}\frac {|\{u,v: (u \in V_i) \wedge (v\in V_j) \wedge (i<j) \wedge (W(v,u)=1)\}|}{{n\choose 2}} \leq \eps \min_{\pi \in\Pi(V)} C(\pi, V, W)\ .\end{equation}
%\begin{equation}\label{eq:globalorder} {\sum_{1\leq i<j\leq k}\sum_{(u,v)\in V_i\times V_j} \one_{W(v,u)=1}} \leq \eps \min_{\pi \in\Pi(V)} C(\pi, V, W)\ 
\end{equation}
\end{itemize}
\end{defn}
Intuitively, an $\eps$-good decomposition 
identifies a block-ranking of the data that is difficult to rank in accordance with $W$ internally on average among big blocks (\emph{local chaos}),
yet possible to rank almost optimally while respecting the decomposition (\emph{approximate optimality}).
% Active learning
%is used for learning preferences between blocks, while standard learning is used within.  
We show how to take advantage of an $\eps$-good decomposition for learning
in Section~\ref{sec:statisticallearningtheory}. 
 The ultimate goal will be to find an $\eps$-good decomposition of
the input set $V$ using $O(\polylog(n, \eps^{-1}))$ queries into $W$.  

\negspaceB
\subsection{Basic Results from Statistical Learning Theory}\label{sec:statisticallearningtheory}
\negspaceA
In statistical learning theory, one seeks to find a classifier minimizing an expected cost incurred on a random input
by minimizing the empirical cost on a sample thereof.  If we view pairs of elements in $V$ as data points, then the MFAST
cost can be cast, up to normalization, as an expected cost over a random draw of a data point.
 Recall our notation of $\pi(u,v)$ denoting the indicator function for the predicate $u\prec_\pi v$.
  Thus $\pi$ is viewed as a binary hypothesis function over ${V\choose 2}$,  and $\Pi(V)$ can be viewed as the  concept class
 of all binary hypotheses satisfying transitivity: $\pi(u,v) + \pi(v,y) \geq \pi(u,y)$ for all $u,v,y$.

\noindent
A sample $E$ of unordered pairs gives rise to a \emph{partial cost}, $C_E$ defined as follows:
\begin{defn}
 Let $(V,E)$ denote an undirected graph over $V$, which may contain parallel edges ($E$ is a multi-set).   The partial MFAST cost $C_E(\pi)$ is defined as
\negspaceeq
$$ C_E(\pi,V,W) = {n\choose 2}{|E|}^{-1}\sum_{\stackrel{(u,v)\in E}{u <_\pi v}} W(v,u)\ .$$
\end{defn}
\noindent
(The accounting of parallel edges in $E$ is clear.)  The function $C_E(\cdot, \cdot, \cdot)$
can be viewed as an \emph{empirical unbiased estimator} of $C(\pi, V, W)$ if $E \subseteq {V\choose 2}$ is chosen uniformly at random among
all (multi)subsets of a given size.

 The basic question in statistical learning theory is,
how good is the minimizer $\pi$ of $C_E$, in  terms of $C$?  The notion of  VC dimension \cite{vapnik:264} gives us a nontrivial bound which is, albeit suboptimal (as we shall soon see),  a good start for our purpose.
\begin{lem}\label{lem:linvcbound}
 The VC dimension of the set of permutations on $V$, viewed as binary classifiers on pairs of elements, is $n-1$.
\end{lem}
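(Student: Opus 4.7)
The plan is to establish both a lower and upper bound on the VC dimension, each via a direct graph-theoretic argument about the edge set of pairs we try to shatter.

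For the lower bound, I would exhibit an explicit shattered set of $n-1$ pairs. Fix any labeling $V=\{v_1,\dots,v_n\}$ and consider the path of consecutive pairs $E^* = \{\{v_i,v_{i+1}\} : 1 \le i \le n-1\}$. Given any target labeling $b\in\{0,1\}^{n-1}$, orient the $i$-th edge as $v_i \to v_{i+1}$ if $b_i=1$ and as $v_{i+1}\to v_i$ otherwise. The underlying undirected graph is a simple path, hence a tree, so any orientation of it is a DAG. A topological sort of this DAG yields a permutation $\pi\in\Pi(V)$ whose induced labels on $E^*$ are exactly $b$. Thus $E^*$ is shattered, and the VC dimension is at least $n-1$.

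For the upper bound, I would argue that no set $E$ of $n$ unordered pairs can be shattered. Since $(V,E)$ has $n$ vertices and $n$ edges, it is not a forest and therefore contains an (undirected) cycle $u_1,u_2,\dots,u_k,u_1$. Pick the labeling on $E$ that, restricted to the cycle edges, demands the cyclic chain $u_1 \prec_\pi u_2 \prec_\pi \cdots \prec_\pi u_k \prec_\pi u_1$, and is arbitrary on edges outside the cycle. Any permutation satisfying the cyclic constraints would force $u_1 \prec_\pi u_1$ by transitivity of the linear order, a contradiction. Hence this labeling is not realizable by any $\pi\in\Pi(V)$, so $E$ is not shattered and the VC dimension is at most $n-1$.

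The only nontrivial step is arranging the orientations/labels so that the cycle produced in the upper bound has a consistent cyclic sense regardless of the fixed tie-breaking orientations of the pairs in $E$; but this is immediate, since for each edge $\{u_i,u_{i+1}\}$ we simply choose the target bit (0 or 1) that corresponds to $u_i \prec_\pi u_{i+1}$ under the chosen orientation convention. Combining both bounds yields VC dimension exactly $n-1$, as claimed.
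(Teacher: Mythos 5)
Your proposal is correct, and for the upper bound it uses exactly the same argument as the paper: any set of $n$ pairs on $n$ vertices contains an undirected cycle, and no linear order can realize a cyclic chain of preferences. The paper's Appendix~\ref{appendix:vcbound} in fact proves \emph{only} this upper bound (VC dimension $\leq n-1$); you additionally supply the matching lower bound by shattering a path of $n-1$ edges, noting that any orientation of a tree is a DAG and hence admits a topological sort that realizes the desired labels. This completes the ``$= n-1$'' claim of the lemma, which the paper states but leaves half-argued (deferring to \cite{AilonR11}). One small stylistic point: the VC lower bound is not actually needed anywhere in the paper (only the upper bound feeds into Propositions~\ref{prop:vc} and~\ref{prop:vc1}), which is presumably why the appendix omits it — but it is good that you noticed the asymmetry and closed the gap.
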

It is easy to show that the VC dimension is at most $O(n\log n)$.  Indeed, the number of permutations is at most $n!$, and the VC
dimension is always bounded by the log of the concept class cardinality.  That the bound is linear was proven in \cite{AilonR11}. We present the proof
here in Appendix~\ref{appendix:vcbound} for completeness.
The implications of the VC bound are as follows.
\begin{prop}\label{prop:vc}
 Assume $E$ is chosen uniformly at random (with repetitions) as a sample of $m$ elements from ${V\choose 2}$, where $m > n$.
Then with  probability at least $1-\delta$ over the sample, all permutations $\pi$ satisfy:
%$ C(\pi,V,W) \leq C_E(\pi,V,W) + O\left (n^2\sqrt {\frac{n\log m + \log(1/\delta)}{m}}\right )\ .$
%$$ C_E(\pi,V,W) + O\left (\sqrt {\frac{n\log m + \log(1/\delta)}{m}}\right ) \leq C(\pi,V,W) \leq C_E(\pi,V,W) + O\left (\sqrt {\frac{n\log m + \log(1/\delta)}{m}}\right )\ .$$
\begin{equation*} \left |C_E(\pi,V,W) - C(\pi,V,W) \right| =  n^2 O\left (\sqrt {\frac{n\log m + \log(1/\delta)}{m}}\right )\ .
\end{equation*}
\end{prop}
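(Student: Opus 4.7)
My plan is to reduce the claim directly to a standard VC uniform-convergence bound, using Lemma~\ref{lem:linvcbound} as the essential input.

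First I would normalize. Define $\bar C(\pi,V,W) = C(\pi,V,W)/{n\choose 2}$ and $\bar C_E(\pi,V,W) = C_E(\pi,V,W)/{n\choose 2}$. By the definition of $C_E$, the quantity $\bar C_E(\pi,V,W)$ is exactly the empirical mean, over $m$ i.i.d. uniform samples $(u,v) \in {V\choose 2}$, of the indicator $\one[\pi(u,v)\neq W(u,v)]$, while $\bar C(\pi,V,W)$ is its expectation. Thus establishing the proposition amounts to a uniform deviation bound for the function class
$$\mathcal{F}_W = \{(u,v)\mapsto \one[\pi(u,v)\neq W(u,v)] : \pi\in\Pi(V)\}$$
over the domain ${V\choose 2}$, with a slack of $O(\sqrt{(n\log m + \log(1/\delta))/m})$; multiplying through by ${n\choose 2}=O(n^2)$ then yields the stated bound on $|C_E-C|$.

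Second, I would observe that $\mathcal{F}_W$ is obtained from the permutation class $\{(u,v)\mapsto\pi(u,v):\pi\in\Pi(V)\}$ by XOR-ing each hypothesis with the fixed target labeling $W$. XOR with a fixed binary function preserves shatter coefficients (a set $S$ is shattered by the original class iff it is shattered by the XOR-ed class), hence preserves VC dimension. By Lemma~\ref{lem:linvcbound}, the permutation class has VC dimension $n-1$, so $\mathcal{F}_W$ does too.

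Third, I would invoke the classical VC uniform-convergence theorem: for any function class of VC dimension $d$ and any i.i.d. sample of size $m$, with probability at least $1-\delta$ every hypothesis has empirical and true means within $O(\sqrt{(d\log m + \log(1/\delta))/m})$. Plugging in $d=n-1$ and rescaling by ${n\choose 2}$ finishes the proof.

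There is no genuine obstacle: the only place a mistake could creep in is in the bookkeeping around the normalization factor ${n\choose 2}/|E|$ in the definition of $C_E$ (one must check that it exactly compensates for the probability weight of a uniform draw from ${V\choose 2}$, which it does) and in justifying that VC dimension is invariant under XOR with a fixed labeling, which is a one-line observation about shattered sets. The substantive content of the proposition lives entirely in Lemma~\ref{lem:linvcbound}.
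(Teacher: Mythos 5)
Your proof is correct and matches the approach the paper implicitly takes: the paper presents Proposition~\ref{prop:vc} as an immediate consequence of Lemma~\ref{lem:linvcbound} together with the classical VC uniform-convergence theorem, with no further argument given. Your normalization bookkeeping and the observation that XOR with the fixed labeling $W$ preserves VC dimension are exactly the two small checks needed to make that implication rigorous.
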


The consequence of Proposition~\ref{prop:vc} are as follows:  If we want to minimize $C(\pi, V, W)$ over $\pi$ to within an
additive error of $\mu n^2$, and succeed in doing so with probability at least $1-\delta$, it is enough to  choose a sample $E$ of $O(\mu^{-2}(n\log n+\log\delta^{-1}))$  elements from ${V\choose 2}$ uniformly at random (with repetitions),
and optimize $C_E(\pi, V, W)$.  Assume from now on
that $\delta$ is at least $e^{-n}$, so 
that we get a more manageable sample  bound of $O(\mu^{-2}n\log n)$.
Before turning to optimizing $C_E(\pi,V, W)$, a hard
problem in its own right  \cite{Karp, DS},
we should first understand whether this bound is at all good for various scenarios.
We need some basic notions of distance between permutations.
For two permutations $\pi, \sigma$, the Kendall-Tau  distance $\dtau(\pi, \sigma)$ is defined as
\negspaceeq
%$$ \dtau(\pi, \sigma) = {n\choose 2}^{-1}\sum_{u\neq v} \one[(u<_\pi v) \wedge (v <_\pi u)]\ .$$
$$ \dtau(\pi, \sigma) = \sum_{u\neq v} \one[(u\prec_\pi v) \wedge (v \prec_\pi u)]\ .$$
The Spearman Footrule distance $\dfoot(\pi,\sigma)$ is defined as
\negspaceeq
%$$ \dfoot(\pi, \sigma) =  {n\choose 2}^{-1}\sum_u |\rank_\pi(u) - \rank_\sigma(u)|\ .$$
$$ \dfoot(\pi, \sigma) = \sum_u |\rank_\pi(u) - \rank_\sigma(u)|\ .$$
The following is a well known inequality due to Graham and Diaconis \cite{diaconis} relating the two distance measures for all $\pi,\sigma$:
\negspaceeq
\begin{equation}\label{eq:diaconis}
 \dtau(\pi, \sigma) \leq \dfoot(\pi, \sigma) \leq 2\dtau(\pi, \sigma)\ .
 \end{equation}

Clearly $d_\tau$ and $\dfoot$ are metrics.  It is also clear that $C(\cdot, V, \cdot)$ is an extension of $d_\tau(\cdot, \cdot)$ to distances between
permutations and binary tournaments, with the triangle inequality of the form $d_\tau(\pi, \sigma) \leq C(\pi, V, W) + C(\sigma, V, W)$ satisfied
for all $W$ and $\pi, \sigma \in \Pi(V)$.

Assume now that we are able, using Proposition~\ref{prop:vc} and the ensuing comment, to find a solution $\pi$ for MFAST, with an additive regret of $O(\mu n^2)$ with respect to an optimal solution $\pi^*$ for some $\mu>0$.  % (An additive error with respect to an optimal solution is known as \emph{regret}).  %\footnote{Ignore the computational complexity of optimizing over $C_E$ for now - we will get to that later.}
The triangle inequality implies that the distance $d_\tau(\pi, \pi^*)$ between our solution and the true optimal is $\Omega(\mu n^2)$.  By (\ref{eq:diaconis}),
this means that $\dfoot(\pi, \pi^*) = \Omega(\mu n^2)$.  By the definition of $\dfoot$, this means that the  averege element $v\in V$ is translated $\Omega(\mu n)$ positions away from its position in $\pi^*$.  In a real life application (e.g. in information retrieval), one may want elements to be at most a constant $\gamma$ positions away from their position in a correct permutation.  This translates to a sought regret of $ O(\gamma n )$ in $C(\pi, V, W)$, or, using the above notation, to $\mu = \gamma/n$.  Clearly,
Proposition~\ref{prop:vc} cannot guarantee less than a quadratic sample size for such a regret, %$\eps = O(\gamma/n)$, 
which is tantamount to querying $W$
in its entirety.  {\emph We can do better:}  In this work, for any $\eps>0$  we will achieve a regret of  $O(\eps C(\pi^*, V, W))$ using  $O(\polylog(n, \eps^{-1}))$ queries into $W$, regardless of how small the optimal cost $C(\pi^*, V, W)$ is.  Hence,
our regret is relative to the optimal loss.
This is clearly not achievable using Proposition~\ref{prop:vc}.

\noindent
Before continuing, we need need a slight generalization of Proposition~\ref{prop:vc}.
\begin{prop}\label{prop:vc1}
Let $V_1,\dots, V_k$ be an ordered decomposition of $V$.  Let $\big$ denote the set of indices $i\in[k]$ such that $V_i$ is big in $V$.
 Assume $E$ is chosen uniformly at random (with repetitions) as a sample of $m$ elements from $\bigcup_{i\in\big}{V_i\choose 2}$, where $m > n$.  For each $i=1,\dots, k$, let $E_i = E \cap {V_i\choose 2}$.
Define $C_E(\pi, \{V_1, \dots, V_k\}, W)$
to be
%$$
% C_{E_1,\dots, E_k}(\pi, \{V_1, \dots, V_k\}, W) =  {n\choose 2}^{-1}  \left (\sum_{i=1}^k {n_i\choose 2} C_{E_i}(\pi_{|V_i}, V_i, W_{|V_i}) + \sum_{1\leq i < j \leq k} \sum_{(u,v) \in V_i\times V_j} \one_{v <_\pi u} \right )\ $$
\begin{equation}\label{eq:defCEdecomp}
 C_{E}(\pi, \{V_1, \dots, V_k\}, W) =  \left(\sum_{i\in\big} {n_i\choose 2}\right) |E|^{-1} \sum_{i\in\big} {n_i\choose 2}^{-1} |E_i| C_{E_i}(\pi_{|V_i}, V_i, W_{|V_i})\ .\end{equation} %+ \sum_{1\leq i < j \leq k} \sum_{(u,v) \in V_i\times V_j} \one_{v \prec_\pi u}\ . $$
(The normalization is defined so that the expression is an unbiased estimator of $\sum_{i\in\big} C(\pi_{|V_i}, V_i, W_{|V_i})$.  If $|E_i|=0$ for some $i$,  formally define ${n_i\choose 2}^{-1} |E_i|C_{E_i}(\pi_{|V_i}, V_i, W_{|V_i})=0$.)
Then with  probability at least $1-e^{-n}$ over the sample, all permutations $\pi\in\Pi(V)$ satisfy:
%$ C(\pi,V,W) \leq C_E(\pi,V,W) + O\left (n^2\sqrt {\frac{n\log m + \log(1/\delta)}{m}}\right )\ .$
%$$ C_E(\pi,V,W) + O\left (\sqrt {\frac{n\log m + \log(1/\delta)}{m}}\right ) \leq C(\pi,V,W) \leq C_E(\pi,V,W) + O\left (\sqrt {\frac{n\log m + \log(1/\delta)}{m}}\right )\ .$$
\begin{equation*}
\left | C_E(\pi,\{V_1,\dots, V_k\},W) - \sum_{i\in\big} C(\pi|_{V_i},V_i,W|_{V_i})\right| =  \sum_{i\in\big} {n_i\choose 2}O\left ( \sqrt {\frac{n\log m + \log(1/\delta)}{m}}\right ) \ .
\end{equation*}
\end{prop}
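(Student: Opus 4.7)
The plan is to follow the template of Proposition~\ref{prop:vc}, only with the universe of pairs restricted to the big blocks. Write $D = \bigcup_{i \in \big}\binom{V_i}{2}$ and $N = |D| = \sum_{i\in\big}\binom{n_i}{2}$. I would first unpack the definition~(\ref{eq:defCEdecomp}): each factor $\binom{n_i}{2}^{-1}|E_i|C_{E_i}(\pi_{|V_i}, V_i, W_{|V_i})$ collapses to $\sum_{\{u,v\} \in E_i,\ u <_\pi v}W(v,u)$, and summing over $i \in \big$ and multiplying by $N/|E|$ yields
\begin{equation*}
C_E(\pi, \{V_1,\dots,V_k\}, W) = \frac{N}{|E|}\sum_{\substack{\{u,v\} \in E \\ u <_\pi v}}W(v,u)\ .
\end{equation*}
For each uniform draw $e \in D$ the summand $X_e(\pi) := N \cdot W(v,u)\one[u <_\pi v] \in [0,N]$ has expectation $\sum_{i\in\big}C(\pi|_{V_i}, V_i, W|_{V_i})$, confirming that $C_E$ is an unbiased $N$-scaled empirical average over $m$ i.i.d.\ draws from $D$.

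Next I would identify the relevant hypothesis class. Each $\pi \in \Pi(V)$ acts on $D$ as a binary classifier whose value on a pair $\{u,v\} \in \binom{V_i}{2}$ depends only on $\pi|_{V_i}$. Let $\H|_D$ denote the restriction of $\Pi(V)$ to $D$. Since restricting a concept class to a subdomain never increases VC dimension, Lemma~\ref{lem:linvcbound} yields $\text{VCdim}(\H|_D) \leq n-1$.

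I would then invoke the standard VC uniform convergence inequality for bounded i.i.d.\ random variables, rescaled by $N$: with probability at least $1-\delta$, every $\pi \in \Pi(V)$ satisfies
\begin{equation*}
\left|C_E(\pi,\{V_1,\dots, V_k\},W) - \sum_{i\in\big}C(\pi|_{V_i},V_i, W|_{V_i})\right| = N \cdot O\!\left(\sqrt{\frac{n\log m + \log(1/\delta)}{m}}\right)\ .
\end{equation*}
Substituting $\delta = e^{-n}$ (which matches the $1-e^{-n}$ probability in the statement) and recalling $N = \sum_{i\in\big}\binom{n_i}{2}$ recovers the displayed bound verbatim.

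The only real obstacle is bookkeeping: verifying that the normalization in (\ref{eq:defCEdecomp}) is unbiased for the \emph{sum} of intra-block costs over big blocks, and that the ``pair budget'' $N$ correctly replaces the $\binom{n}{2}$ appearing in Proposition~\ref{prop:vc}. The VC-dimension step is immediate, since restriction to a subdomain can only shrink the shatterable sets; no product-class or sauer-shelah gymnastics are needed beyond Lemma~\ref{lem:linvcbound}.
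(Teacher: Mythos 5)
Your proof is correct and follows essentially the same route as the paper: unpack $C_E$ as a rescaled empirical average, bound the VC dimension of the relevant class by $O(n)$, and invoke VC uniform convergence with $\delta = e^{-n}$. The only difference is in the VC-dimension bookkeeping: you bound $\text{VCdim}(\Pi(V)|_D) \le n-1$ by observing that restriction to the subdomain $D$ cannot increase VC dimension, whereas the paper bounds the same quantity by considering the product class $\prod_{i\in\big}\Pi(V_i)$ on the disjoint union $\bigcup_{i\in\big}V_i\times V_i$ and bounding its VC dimension by $\sum_{i\in\big}(n_i-1)\le n$ (the sum of the per-block dimensions from Lemma~\ref{lem:linvcbound}). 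Both arguments are valid and give a bound of the same order; your restriction argument is slightly more direct since it avoids appealing to the fact that the VC dimension of a product class on disjoint domains is subadditive, while the paper's product-class argument gives a marginally sharper constant. Either way the final uniform-convergence step is identical.
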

\begin{proof}
Consider the set of binary functions $\prod_{i\in \big} \Pi(V_1)$ on the domain  $\bigcup_{i\in\big} V_i\times V_i$,
defined as follows:  If  $u,v\in V_j\times V_j$  for some $j\in\big$, then 
$$ \left((\pi_i)_{i\in\big}\right)(u,v) = \pi_j(u,v)\ .$$
It is clear that the VC dimension of this function set is at most the
sum of the VC dimensions of $\{\Pi(V_i)\}_{i\in\big}$, hence by Lemma~\ref{lem:linvcbound} at most $n$.  The result follows.
\end{proof}
\subsection{Using an $\eps$-Good Partition}
\noindent
The following lemma explains why an $\eps$-good partition is good for our purpose.
\begin{lem}\label{lem:whygoodisgood}
Fix $\eps>0$ and assume we have an $\eps$-good partition (Definition~\ref{def:goodpart}) $V_1,\dots, V_k$ of $V$. 
Let $\big$ denote the set of $i\in[k]$ such that $V_i$ is big in $V$, and let $\bar \big = [k]\setminus \big$. Let $n_i=|V_i|$ for $i=1,\dots, n$,  and let $E$ denote  a random sample of $O( \eps^{-6} n \log n)$ elements from $\bigcup_{i\in\big} {V_i\choose 2}$, each element chosen uniformly at random with repetitions.  Let $E_i$ denote $E\cap {V_i\choose 2}$.  
Let $C_E(\pi, \{V_1, \dots, V_k\}, W)$ be defined as in (\ref{eq:defCEdecomp}).
For any $\pi\in\Pi(V_1,\dots, V_k)$ define:
$$ 
\tilde C(\pi) :=  C_{E}(\pi, \{V_1, \dots, V_k\}, W) +  \sum_{i\in\bar\big} C(\pi_{|V_i}, V_i, W_{|V_i}) + \sum_{1\leq i<j\leq k} \sum_{(u,v)\in V_i\times V_j} {\one}_{v \prec_\pi u} \ .$$
Then the following event occurs with probability at least $1-e^{-n}$: For all $\sigma\in \Pi(V_1,\dots, V_k)$, 
\begin{equation}\label{eq:mainmleq}
\left| \tilde C(\sigma) - C(\sigma, V, W)\right| \leq \eps \min_{\pi\in\Pi(V)} C(\pi, V, W)\ .
\end{equation}
Also, if $\sigma^*$ is any minimizer of $\tilde C(\cdot)$ over $\Pi(V_1,\dots, V_k)$, then
\begin{equation}\label{eq:mainmleq1} C(\sigma^*, V, W)\leq (1+2\eps)\min_{\pi\in\Pi(V)} C(\pi, V, W)\ .\end{equation}
\end{lem}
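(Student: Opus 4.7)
The plan is to show that, for any $\sigma\in\Pi(V_1,\dots,V_k)$, the quantity $\tilde C(\sigma)-C(\sigma,V,W)$ equals exactly the discrepancy between the empirical estimator $C_E$ and the true big-block cost $\sum_{i\in\big}C(\sigma_{|V_i},V_i,W_{|V_i})$, and then to invoke Proposition~\ref{prop:vc1} uniformly, converting its additive bound into the desired relative bound via the local-chaos clause of Definition~\ref{def:goodpart}.

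First I would split the true cost. Since $\sigma$ respects $V_1,\dots,V_k$, every cross-block pair $(u,v)\in V_i\times V_j$ with $i<j$ satisfies $u\prec_\sigma v$, so the cross-block contribution $\sum_{i<j}\sum_{(u,v)\in V_i\times V_j}W(v,u)$ is a constant depending only on the decomposition, and agrees (by construction) with the third summand of $\tilde C(\sigma)$. Consequently
\begin{equation*}
\tilde C(\sigma)-C(\sigma,V,W)\;=\;C_E(\sigma,\{V_1,\dots,V_k\},W)-\sum_{i\in\big}C(\sigma_{|V_i},V_i,W_{|V_i}),
\end{equation*}
so the small-block and cross-block terms cancel exactly and only the big-block approximation error remains.

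Next I would apply Proposition~\ref{prop:vc1} with $m=\Theta(\eps^{-6}n\log n)$ and $\delta=e^{-n}$. Plugging $m$ into $\sqrt{(n\log m+\log\delta^{-1})/m}$ yields (absorbing $\log\eps^{-1}$ into $\log n$ under the mild assumption $\eps^{-1}\le\poly(n)$) a bound of order $\eps^3$, so simultaneously for every $\sigma\in\Pi(V)$ one has
\begin{equation*}
\bigl|\tilde C(\sigma)-C(\sigma,V,W)\bigr|\;\le\;O(\eps^3)\sum_{i\in\big}\binom{n_i}{2}.
\end{equation*}
The \emph{main obstacle}, and the conceptual heart of the lemma, is that this is only an absolute bound in terms of $\sum_{i\in\big}\binom{n_i}{2}$, whereas we need a multiplicative bound against $\min_\pi C(\pi,V,W)$. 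This is exactly where the local-chaos condition~(\ref{eq:localchaos}) intervenes: it gives $\sum_{i\in\big}\binom{n_i}{2}\le\eps^{-2}\min_\pi C(\pi,V,W)$, so the $O(\eps^3)$ sample error becomes $O(\eps)\min_\pi C(\pi,V,W)$. Tuning the hidden constants inside $m=\Theta(\eps^{-6}n\log n)$ turns $O(\eps)$ into $\eps$, proving~(\ref{eq:mainmleq}). This is also why the sample size must scale as $\eps^{-6}$ rather than $\eps^{-2}$: we pay two extra factors of $\eps^{-1}$ to bridge the local-chaos inequality, then square to get a VC rate.

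Finally, for~(\ref{eq:mainmleq1}), let $\pi^*\in\argmin_{\pi\in\Pi(V)}C(\pi,V,W)$. By approximate optimality~(\ref{eq:globalorder}) there exists $\tilde\sigma\in\Pi(V_1,\dots,V_k)$ with $C(\tilde\sigma,V,W)\le(1+\eps)C(\pi^*,V,W)$. Apply~(\ref{eq:mainmleq}) twice, once to $\tilde\sigma$ and once to $\sigma^*$, and sandwich through the minimality of $\sigma^*$ for $\tilde C$:
\begin{equation*}
C(\sigma^*,V,W)\le\tilde C(\sigma^*)+\eps C(\pi^*,V,W)\le\tilde C(\tilde\sigma)+\eps C(\pi^*,V,W)\le C(\tilde\sigma,V,W)+2\eps C(\pi^*,V,W),
\end{equation*}
which, with $C(\tilde\sigma,V,W)\le(1+\eps)C(\pi^*,V,W)$, gives $(1+3\eps)C(\pi^*,V,W)$; replacing $\eps$ by $\eps/3$ throughout (only changing constants in $m$) yields the stated $(1+2\eps)$ bound.
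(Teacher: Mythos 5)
Your proof is correct and follows essentially the same route as the paper's: cancel the small-block and cross-block terms to reduce the discrepancy to the big-block estimator error, apply Proposition~\ref{prop:vc1}, and then use the local-chaos inequality~(\ref{eq:localchaos}) to convert the resulting additive $O(\eps^3)\sum_{i\in\big}\binom{n_i}{2}$ bound into the relative $\eps\min_\pi C(\pi,V,W)$ bound; then combine~(\ref{eq:mainmleq}) with~(\ref{eq:globalorder}) and the triangle inequality for~(\ref{eq:mainmleq1}).

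One remark on the last step: your sandwich correctly gives $(1+3\eps)C(\pi^*,V,W)$, whereas the lemma asserts $(1+2\eps)$. The paper's one-line proof does not resolve this either, so the discrepancy is inherited from the statement rather than from you. However, the rescaling fix as you phrase it is not quite clean: substituting $\eps\mapsto\eps/3$ would require the hypothesis to supply an $(\eps/3)$-good decomposition, not the $\eps$-good one you are handed, so this does not simply ``change constants in $m$.'' The honest resolution is either to state the conclusion as $(1+3\eps)$, or to note that the claim is $(1+O(\eps))$ with a concrete constant that can be driven to any target by adjusting the constants inside Definition~\ref{def:goodpart} and in $m$ from the outset. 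You also implicitly corrected the typo in the third summand of $\tilde C$ (it must be the cross-block cost $\sum_{i<j}\sum_{(u,v)\in V_i\times V_j}W(v,u)$, since $\one_{v\prec_\pi u}$ vanishes identically for $\pi\in\Pi(V_1,\dots,V_k)$); that correction is what makes the claimed cancellation go through, and it matches what the paper actually uses in its own proof.
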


Before we prove the lemma, let us discuss its consequences:  Given an $\eps$-good decomposition $V_1,\dots, V_k$ of $V$,
the theorem implies that if we could optimize $\tilde C(\sigma)$ over $\sigma\in\Pi(V_1,\dots, V_k)$, % $C_E(\pi, \{V_1,\dots, V_k\}, W) + \sum_{i\in\hat\big} C(\pi_{|V_i}, V_i, W_{|V_i})$ 
we would obtain a permutation $\pi$ with a \emph{relative regret} of $2\eps$ with respect to the optimizer of $C(\cdot, V, W)$ over $\Pi(V)$.  Optimizing $\sum_{i\in\hat\big} C(\pi_{|V_i}, V_i, W_{|V_i})$ is easy:  Each $V_i$ is
of size at most $\log n/\log\log n$, hence exhaustively searching its corresponding permutation space can be done in polynomial time.  In order to compute the cost of each permutation inside the small sets $V_i$, we would need to query $W_{|V_i}$ in its entirety.  This incurs a query cost of at most $\sum_{i\in\bar\big} {n_i\choose 2} = O(n\log n/\log\log n)$, which is dominated by the cost of
obtaining the $\eps$-good partition in the first place (see next sect section).
Optimizing $C_E(\pi, \{V_1,\dots, V_k\}, W)$ given $E$ is a tougher nut to crack, and is known as the
minimum feedback arc-set (MFAS) problem and considered much harder than
to harder than MFAST \cite{Karp, DS}.  For now we focus on query and not computational complexity, and notice that the size $|E|=O(\eps^{-4}n\log n)$ of the sample
set is all we need.  In Section~\ref{sec:svm} we show a counterpart of Lemma~\ref{lem:whygoodisgood}
which provides similar guarantees for practitioners who choose to relax it using SVM, for which fast solvers exist.

If we assume, in addition, that the decomposition could be computed using $O(n \polylog(n,\eps^{-1}))$ labels (as we
indeed show in the next section), then we would clearly beat the aforementioned  VC bound whenever the optimal solution $\min_{\pi\in \Pi(V)} C(\pi, V, W)$ is at most $O(n^{2-\nu})$, for any $\nu>0$.

%is defined, for any $\pi$, using $O(\eps^{-4}n\log n)$
%labels only.  

\begin{proof}
For any permutation $\sigma\in\Pi(V_1,\dots, V_k)$, it is clear
that $$\tilde C(\sigma) - C(\sigma, V, W) = C_E(\sigma, \{V_1,\dots, V_k\}, W) - \sum_{i\in\big} C(\sigma_{|V_i}, V_i, W_{|V_i})\ .$$
By Proposition~\ref{prop:vc1}, with probability at least $1-e^{-n}$ the absolute value of the RHS is bounded by $\eps^3 \sum_{i\in\big}{n_i\choose 2}$, which is at most $\eps \min_{\pi
\in\Pi(V)} C(\pi, V, W)$ by (\ref{eq:localchaos}).  This establishes (\ref{eq:mainmleq}).
Inequality  (\ref{eq:mainmleq1})  is obtained from (\ref{eq:mainlmeq}) together with (\ref{eq:globalorder})
and the triangle inequality.

%Notice that $C(\pi, V, W)$ can be decomposed as $$C(\pi, V, W) = \sum_{i=1}^k C(\pi_{|V_i}, V_i, W_{|V_i}) + \sum_{1\leq i<j\leq k}\sum_{(u,v)\in V_i\times V_j} {\one}_{W(u,v)=1}\one_{v\prec_\pi u}\ .$$
%Hence it is enough to show that with probability at least $1-e^{-n}$, $A+B$ is bounded above by $2\eps C(\pi, V, W)$, where
%\begin{eqnarray}
%A &=& \left | C_{E}(\pi, \{V_1, \dots, V_k\}, W) - \sum_{i=1}^k C(\pi_{|V_i}, V_i, W_{|V_i})  \right | \\
%B &=&   \left |\sum_{1\leq i<j\leq k} \sum_{(u,v)\in V_i\times V_j} {\one}_{v \prec_\pi u} -  \sum_{1\leq i<j\leq k}\sum_{(u,v)\in V_i\times V_j} {\one}_{W(u,v)=1}\one_{v\prec_\pi u}\right |\ .
%\end{eqnarray}
%To bound $A$, use Proposition~\ref{prop:vc1} to obtain that with probability at least $1-e^{-n}$,  $A \leq\sum_{i=1}^k {n_i\choose 2}\eps^2$.
%
%\noindent
%By the local chaos property (\ref{eq:localchaos}), this implies $A \leq \eps \sum_{i=1}^k C(\pi_{|V_i}, V_i, W_{|V_i}) \leq \eps C(\pi, V, W)$.
%
%\noindent
%Consider now the expression 
%$ \sum_{1\leq i<j\leq k} \sum_{(u,v)\in V_i\times V_j} {\one}_{v \prec_\pi u}$.  By the triangle inequality, for all $u,v\in V$: $$|{\one}_{v \prec_\pi u}  - {\one}_{W(u,v)=1}\one_{v\prec_\pi u}| \leq {\one}_{W(v,u)=1}\ .$$ Hence, summing over all $(u,v)\in V_i\times V_j$ for $1\leq i<j\leq k$: $$ B \leq  \sum_{1\leq i<j\leq k} \sum_{(u,v)\in V_i \times V_j} {\one}_{W(v,u)=1}\ .$$
%The last RHS is bounded using the global order property (\ref{eq:globalorder}) by $\eps C(\pi, V, W)$.  This concludes the proof.
%
\end{proof}

\negspaceB
\section{A Query Efficient Algorithm for $\eps$-Good Decomposing}\label{sec:active}
\negspaceA
\def\optoracle{\operatorname{OptOracle}}
\noindent
The section is dedicated to proving the following:
\begin{thm}\label{thm:main}
Given a set $V$ of size $n$, a preference oracle $W$ and an error tolerance parameter $0<\eps<1$,
there exists a polynomial time algorithm which returns, with constant probabiliy, an $\eps$-good partition of $V$, querying at most
$O(\eps^{-6}n\log^5 n)$ locations in $W$ on expectation.  The running time of the algorithm (counting computations) is $O(n \polylog(n,\eps^{-1}))$.
\end{thm}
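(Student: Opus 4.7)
The plan is to build a query-efficient analogue of the Kenyon--Schudy PTAS \cite{DBLP:conf/stoc/Kenyon-MathieuS07}, and then read off the $\eps$-good decomposition from the near-optimal permutation it produces. I would begin with the Ailon--Charikar--Newman QuickSort algorithm \cite{DBLP:journals/jacm/AilonCN08}, which uses $O(n\log n)$ expected queries to $W$ and produces a permutation $\pi_0$ of expected cost at most $3\min_\pi C(\pi,V,W)$. This initialization is critical: it gives a baseline $\pi_0$ that is not too far, in $\dtau$ distance, from an optimum, which will bound the total work that any subsequent local search must do.

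Starting from $\pi_0$, I would perform repeated ``approximate single-element improvements'': at each step, look for some element $v$ such that moving $v$ from its current rank $\rank_\pi(v)$ to a new rank $i$ strictly reduces $C(\pi,V,W)$ by more than a threshold of order $\eps\cdot\max(1,|i-\rank_\pi(v)|)$. To avoid paying $\Omega(n)$ queries per test, I would estimate the cost change by a $\samplerank$-style sample, querying $O(\polylog(n,\eps^{-1}))$ pairs $(v,u)$ for $u$ drawn at exponentially growing scales $2,4,8,\dots$ around $\rank_\pi(v)$. A Chernoff bound ensures that, with high probability, the estimator simultaneously certifies for every window size whether a profitable move of that displacement exists for $v$. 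This is exactly the $\localimprove$ philosophy hinted at in the introduction, and its correctness rides on the same potential-function analysis used by Kenyon--Schudy.

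The main obstacle, and the real technical contribution, is sample \emph{freshness}: after one element moves, the ranks of many others shift, so the sampled neighborhoods become stale and, if re-sampled naively each round, blow the query budget to $\Omega(n^2)$. I would resolve this by (i) showing that a single move only invalidates samples for elements whose window crossed the moved element's new or old position, which can be charged locally, and (ii) amortizing: each approximate improvement step decreases a potential $C(\pi,V,W)$ by an amount proportional to the rank displacement, and since the starting potential is $O(n^2)$ while each refresh costs only $\polylog(n,\eps^{-1})$ queries per element, the total query count telescopes to $O(\eps^{-6}n\log^5 n)$. The exponent $6$ comes from combining the displacement-proportional improvement threshold with the $\eps^{-2}$ factor needed inside the Chernoff bound and the $\eps^{-2}$ factor in the target relative error; getting this accounting right, without double-counting refreshes across scales, is the delicate step.

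Once no approximate improvement is found, the resulting $\pi$ is $(1+O(\eps))$-approximate, and I would read off the decomposition $V_1,\dots,V_k$ by cutting $\pi$ at positions where no single element wants to cross, following Kenyon--Schudy's block-decomposition step. The ``local chaos'' condition \eqref{eq:localchaos} holds because any big block $V_i$ on which the internal optimum cost were less than $\eps^2\binom{n_i}{2}$ would contain an element whose move out of the block would be detected as an improvement by $\localimprove$, contradicting termination. The ``approximate optimality'' condition \eqref{eq:globalorder} follows because $\pi$ itself lies in $\Pi(V_1,\dots,V_k)$ and is already within $(1+\eps)$ of the global optimum. Adding the $O(n\log n/\log\log n)$ queries potentially needed to finish small blocks leaves the overall query cost at $O(\eps^{-6}n\log^5 n)$, and the running time is polynomial since each improvement step is $\polylog(n,\eps^{-1})$ and the number of such steps is $\tilde O(n)$ by the potential argument.
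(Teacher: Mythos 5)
Your proposal starts from the same ingredients as the paper (QuickSort initialization, an approximate single-vertex improvement step $\localimprove$ with exponentially scaled samples, and amortized sample-refreshing), but the overall control structure is not the one the paper uses, and the places where your argument diverges contain genuine gaps.

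First, the paper does \emph{not} run one global local-search pass from $\qsout$ and then read off a decomposition. It runs a recursive randomized divide-and-conquer: at each recursion node $X$ it calls $\localimprove$ on $(V_X, \pi_X)$, then draws a random split point $k_X\in[N_X/3,2N_X/3]$, splits $V_X$ into $V_L,V_R$ at rank $k_X$ of the locally improved permutation, and recurses. The decomposition $V_1,\dots,V_k$ is the set of \emph{leaves} of this recursion tree, where leaves are declared either because the block is small or because a sampled estimate shows $C_E(\pi_X,V_X,W)=\Omega(\eps^2 N_X^2)$ (line~\ref{line:early}). This recursive structure is essential, not cosmetic.

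Second, and more substantively, your approximate-optimality argument rests on the claim that once no $\localimprove$ move is available, the permutation is already $(1+O(\eps))$-optimal. That claim is false: single-vertex local search can terminate at a local optimum whose cost is a constant factor above the true minimum (this is exactly why Kenyon--Schudy needed the random divide-and-conquer in the first place). In the paper, approximate optimality \eqref{eq:globalorder} is obtained by bounding the expected extra cost contributed by each random split via the crossing quantity $T_X$ in \eqref{eq:defT}, invoking Kenyon--Schudy's split lemma (Lemma~\ref{lem:fromkenschud}), and then summing over the recursion tree; the elementary-calculus argument establishing \eqref{eq:Gbound} is where the $(1+O(\eps))$ factor actually comes from. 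None of this appears in your outline, and without it the approximation guarantee does not follow.

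Third, your argument for the local-chaos property \eqref{eq:localchaos} does not work. You claim that a big block with small internal optimum would contain an element whose move would be detected as an improvement; but local optimality is perfectly compatible with low internal cost (e.g.\ a block on which $W$ is already transitive and $\pi$ sorts it correctly has zero cost and admits no improving move). The paper instead enforces local chaos explicitly: a big block is only emitted as a leaf when the sampled estimate $\approxC$ of its cost under the current (near-optimal) $\pi_X$ is $\Omega(\eps^2 N_X^2)$, which together with the accuracy of the estimator and the near-optimality of $\pi_X$ on $V_X$ (established inductively through the $\localimprove$ analysis) gives a lower bound on $\min_\sigma C(\sigma,V_X,W_{|V_X})$. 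Your sampling, staleness, and amortization intuitions are on track and match the role of $\varphi_{u\to j}$ and Lemma~\ref{lem:mutationcomplexity} in the paper, but the decomposition mechanism and both halves of the $\eps$-goodness argument need to be replaced with the recursion-tree analysis.
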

\noindent
\noindent
Before describing our algorithm, we need  some definitions.

%\subsection{Single Vertex Moves}
%As stated above, our algorithm is based on the PTAS from \cite{{DBLP:conf/stoc/Kenyon-MathieuS07}.
%We also need to make some significant changes to \cite{DBLP:conf/stoc/Kenyon-MathieuS07} in order to make
%it \emph{query efficient}.
%Indeed, this work we diverge from \cite{DBLP:conf/stoc/Kenyon-MathieuS07} in two ways:  
%First, we assume that preference 
%relation $W$ is not given to us 
%explicitly, and we pay a unit
%cost to reveal%
%$W(u,v)$ for any $u,v\in V$.  In other words, we view each preference relation as a label for machine learning.
%Second, as in learning the goal here is to fit a  hypothesis $h$ from a concept class $H$ predicting the pairwise preference.
%The hypothesis function $h$ will take as input an ordered pair $(u,v)$ and will return $1$ if it predicts that $u$ precedes $v$, otherwise $0$.
%We assume $h(u,v)=1-h(v,u)$.  
%For the following definitions, $V$ is a fixed set of elements of size $n$, and $W$ is a preference function over $V\times V$.  

%In what follows, ${V\choose 2}$ denotes the set of unordered pairs of elements in $V$.  If $E\subseteq {V\choose 2}$, then
%it should be clear from the context  that $(u,v)\in E$ refers to $u,v$ as an unordered pair.

\def\ra{\rightarrow}
\begin{defn}
Let $\pi$ denote a permutation over $V$.  Let $v\in V$ and $i\in [n]$.  We define  $\pi_{v\ra i}$ to be the permutation obtained
by moving the rank of $v$ to $i$ in $\pi$, and leaving the rest of the elements in the same order.  For example, if $V=\{x,y,z\}$
and $(\pi(1), \pi(2), \pi(3)) = (x,y,z)$, then $(\pi_{x\rightarrow 3}(1), \pi_{x\rightarrow 3}(2), \pi_{x\rightarrow 3}(3)) = (y,z,x)$.
\end{defn}
\def\test{\operatorname{TestMove}}
\begin{defn}
Fix a permutation $\pi$ over $V$, an element $v\in V$  and an integer $i\in [n]$.  We define the number $\test(\pi, V, W, v, i)$ as the  decrease in
 the cost $C(\cdot, V, W)$ achieved
  by moving from $\pi$ to $\pi_{v\rightarrow i}$. 
  More precisely, 
  $ \test(\pi, V, W, v, i) = C(\pi, V, W) - C(\pi_{v\rightarrow i}, V, W)\ .$
  Equivalently, if $i\geq \rank_\pi(v)$ then  $$\test(\pi, V, W, v, i) = \sum_{u: \rank_\pi(u) \in [\rank_\pi(v)+1 ,i]} (W_{uv}-W_{vu})\ .$$
    A similar expression can be written for $i < \rank_\pi(v)$.

    Now assume that we have a multi-set $E\subseteq {V\choose 2}$. 
    We define $\test_E(\pi, V, W, v, i)$, for $i \geq \rank_\pi(v)$, as
\negspaceeq
    \begin{eqnarray*}
    \test_E(\pi, V, W, v, i) &=& \frac {|i-\rank_\pi(v)|}{|\tilde E|}       \sum_{{u:(u,v)\in \tilde E}} (W(u,v)-W(v,u))\ ,
    \end{eqnarray*}
where the multiset $\tilde E$ is defined as  $\{(u,v)\in E: \rank_\pi(u) \in [\rank_\pi(v)+1, i]\}$.  %The definition  is such
%that if $E$ is obtained by uniformly
%at random choosing a sample $S$ of elements $u$ such that $\rank_\pi(u)\in [\rank_\pi(v)+1,i]$ and adding $(u,v)$ to $E$
%for all $u\in S$, then $\test_E(\pi, V, W, v, i)$ is
%an unbiased estimator of $\test(\pi, V,W,v,i)$.

    Similarly, for $i < \rank_\pi(v)$ we define
\negspaceeq
    \begin{eqnarray}\label{eq:deftestE}
    \test_E(\pi, V, W, v, i) &=& \frac {|i-\rank_\pi(v)|}{|\tilde E|}
     \sum_{u:(u,v)\in \tilde E}(W(v,u)-W(u,v))\ ,
    \end{eqnarray}
where the multiset  $\tilde E$ is now defined as  $\{(u,v)\in E: \rank_\pi(u) \in [i,\rank_\pi(v)-1]\}$.  

\end{defn}
\noindent

\def\E{{\bf E}}
\begin{lem}
Fix a permutation $\pi$ over $V$, an element $v\in V$, an integer $i\in [n]$ and another integer $N$.  Let $E\subseteq {V\choose 2}$ be a random (multi)-set of size $N$
with elements  $(v,u_1),\dots, (v,u_N)$, drawn so that for each $j\in [N]$ the element $u_j$ is chosen uniformly at random from among the elements lying between
$v$ (exclusive) and position $i$  (inclusive) in $\pi$.
Then $\E[\test_E(\pi, V, W, v, i)] = \test(\pi, V, W, v, i)$.  Additionally, for any $\delta>0$, except with probability of failure $\delta$,
\negspaceeq
$$ |\test_E(\pi, V, W, v, i) - \test(\pi, V, W, v, i)| = O\left (  |i-\rank_\pi(v)|\sqrt{\frac {\log \delta^{-1}}{N}} \right )\ .$$
\end{lem}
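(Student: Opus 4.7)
The plan is to treat $\test_E$ as (a scaled version of) a sample mean of $N$ i.i.d.\ bounded random variables whose expectation is exactly $\test$, and then apply a Hoeffding-type concentration inequality.

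First, I would focus on the case $i \geq \rank_\pi(v)$ (the opposite case is fully symmetric, only swapping the sign convention). Write $M = |i - \rank_\pi(v)|$ and let $S$ denote the set of elements $u$ with $\rank_\pi(u) \in [\rank_\pi(v)+1, i]$, so $|S|=M$. By construction every sampled pair $(v,u_j)$ satisfies $u_j \in S$, hence $\tilde E = E$ and $|\tilde E| = N$ deterministically. Therefore
\[
\test_E(\pi,V,W,v,i) \;=\; \frac{M}{N}\sum_{j=1}^N X_j,
\qquad X_j := W(u_j,v) - W(v,u_j).
\]
Since each $u_j$ is uniformly drawn from $S$, the $X_j$'s are i.i.d., and because $W$ is $\{0,1\}$-valued with $W(a,b)+W(b,a)=1$, every $X_j \in \{-1,+1\}$.

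Next I would verify the expectation. Using uniformity of $u_j$ over $S$,
\[
\E[X_j] \;=\; \frac{1}{M}\sum_{u\in S}(W(u,v)-W(v,u)),
\]
so
\[
\E[\test_E] \;=\; \frac{M}{N}\cdot N\cdot \E[X_1] \;=\; \sum_{u\in S}(W(u,v)-W(v,u)) \;=\; \test(\pi,V,W,v,i),
\]
where the last equality is just the formula stated in the definition of $\test$.

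For the deviation bound, I would apply Hoeffding's inequality to the i.i.d.\ variables $X_1,\dots,X_N \in [-1,1]$: for any $t>0$,
\[
\Pr\!\left[\bigl|\tfrac{1}{N}\textstyle\sum_j X_j - \E[X_1]\bigr| \geq t\right] \leq 2\exp(-N t^2/2).
\]
Setting the right-hand side equal to $\delta$ gives $t = O(\sqrt{\log \delta^{-1}/N})$. Multiplying through by the prefactor $M$ yields
\[
\bigl|\test_E(\pi,V,W,v,i) - \test(\pi,V,W,v,i)\bigr| \;=\; M\cdot O\!\left(\sqrt{\tfrac{\log \delta^{-1}}{N}}\right) \;=\; O\!\left(|i-\rank_\pi(v)|\sqrt{\tfrac{\log \delta^{-1}}{N}}\right),
\]
as required. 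For $i<\rank_\pi(v)$ the same argument goes through verbatim with $S=\{u:\rank_\pi(u)\in[i,\rank_\pi(v)-1]\}$ and $X_j := W(v,u_j)-W(u_j,v)$; the sign flip is exactly what is built into (\ref{eq:deftestE}).

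There is no real obstacle here; the only thing worth being careful about is the bookkeeping that the conditioning event defining $\tilde E$ is automatically satisfied by the sampling procedure, which is what makes $|\tilde E|=N$ deterministic and turns $\test_E$ into a clean scaled i.i.d.\ sample mean suitable for Hoeffding.
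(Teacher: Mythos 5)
Your proof is correct and follows exactly the route the paper indicates (which it dismisses in one line as "easily proven using e.g. Hoeffding tail bounds, using the fact that $|W(u,v)|\le 1$"). The key bookkeeping observations you make explicit — that $\tilde E = E$ deterministically so $|\tilde E|=N$, that the $X_j$ are i.i.d.\ in $[-1,1]$, and that the prefactor $M=|i-\rank_\pi(v)|$ carries through — are exactly what that one-line remark leaves to the reader.
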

\noindent
The lemma is easily proven using e.g. Hoeffding tail bounds, using the fact that $|W(u,v)| \leq 1$ for all $u,v$.

\subsection{The Decomposition Algorithm}\label{sec:improved}

\def\qsout{{\pi}}  % output of quicksort (only done at root)
\def\inperm{\pi}   % input permutation to recursive call
\def\approxC{C}  % approximate cost of input permutation to recursive call
\def\tree{{\mathcal T}}
\def\improveout{{\pi_1}}  % local-improve output
\def\costout{{C_1}}  % cost after local-improve output
\def\piL{\pi_L}  % restriction of permutation after local improve to left side
\def\piR{\pi_R}  % restriction of permutation after local improve to right side

Our decomposition algorithm  $\samplerankroot$ is detailed in Algorithm~\ref{fig:samplerankroot}, with subroutines
in Algorithms~\ref{fig:samplerank} and~\ref{fig:localimprove}.
It can be viewed as a query efficient improvement of the main algorithm in \cite{DBLP:conf/stoc/Kenyon-MathieuS07}.
Another difference is that we are not interested in an approximation algorithm for MFAST:  Whenever we reach a
small block (line~\ref{line:small}) or a big block
with a probably approximately sufficiently high cost  (line~\ref{line:early}) % satisfying
%(\ref{eq:localchaos}) 
in our recursion  of Algorithm~\ref{fig:samplerank}), 
we simply output
it as a block in our partition.  Denote the resulting outputted partition by $V_1,\dots, V_k$.
Denote by $\hat \pi$ the minimizer
of $C(\cdot, V, W)$ over $\Pi(V_1,\dots, V_k)$.
Most of the analysis is dedicated to showing that $C(\hat\pi, V, W) \leq (1+\eps)\min_{\pi\in\Pi(V)} C(\pi, V, W)$, thus establishing (\ref{eq:globalorder}).
%  Implicitly, each
%such outputted block of elements also corresponds
%to a permutation on that block.  The concatenation of all these %permutations
%results in an implicitly outputted permutation on $V$, call it $\hat \pi\in\Pi(V_1,
%\dots, V_k)$.  The high
%cost of  big blocks, as required from the definition of $\eps$-goodness, is ensured in lines~\ref{line:smallif}-~\ref{line:early}. (\ref{eq:assumplemsufficientforgood}) for $\hat \pi$,
%leaving (in virture of Lemma~\ref{lem:sufficientforgood}) the assertion of (\ref{eq:globalorder}) and the $(1+\eps)$-approximation of $\hat \pi$ w.r.t. MFAST for establishing the $\eps$-goodness of the output decomposition.

%Before explaining the algorithm steps in detail, we summarize another major difference between our algorithm
In order to achieve an efficient query complexity compared to  \cite{DBLP:conf/stoc/Kenyon-MathieuS07},
%\begin{itemize}
%\item The main difference is 
we use  procedure $\localimprove$ (Algorithm~\ref{fig:localimprove}) to replace a greedy local improvement step in  \cite{DBLP:conf/stoc/Kenyon-MathieuS07} which is \emph{not} query efficient.
%\item Obtaining an additive approximation in line~\ref{line:early} in $\samplerank$ replaces the call to $\operatorname{AddApprox}$ there
%with a call to a query efficient algorithm.
%\end{itemize}
%Note that Mathieu and Schudy's  objective in
%\cite{DBLP:conf/stoc/Kenyon-MathieuS07} was to obtain a polynomial time algorithm for optimizing $C(\pi, V, W)$ to within a multiplicative
%factor of $(1+\eps)$ for any $\eps>0$ (otherwise known as a \emph{PTAS}).   The query complexity was not investigated there.
%  In fact,
%line~\ref{line:qs} in $\localimprove$ exactly mirrors the call to $\operatorname{QuickSort}$ in \cite{DBLP:conf/stoc/Kenyon-%MathieuS07}. 
Aside from the aforementioned differences, we also raise here the reader's awareness to the query efficiency of $\operatorname{QuickSort}$, which was established by Ailon et al. in \cite{DBLP:jmlr/AilonM08}
(note: an erroneous proof appears in \cite{DBLP:conf/colt/AilonM08}).

%  Refer to Algorithm~\ref{fig:samplerankroot} for the $\samplerankroot$ procedure.
$\samplerankroot$ (Algorithm~\ref{fig:samplerankroot})  takes the following arguments: The set $V$ we want to rank, the preference matrix $W$ and an accuracy argument $\eps$.
 It is implicitly understood that the argument $W$ passed to $\samplerankroot$ is given as a query oracle, incurring a unit cost upon each access to a matrix element by the procedure and any nested calls.

The first step in $\samplerankroot$ is to obtain an expected constant factor approximation  $\qsout$ to MFAST on $V,W$, incurring an expected low query cost.
More precisely, this step returns a random permutation $\qsout$ with an expected cost of $O(1)$ times that of the optimal solution
to MFAST on $V,W$.  The query complexity of this step is $O(n\log n)$ \emph{on expectation} \cite{DBLP:jmlr/AilonM08}.  Before continuing, we make the following
assumption, which holds with constant probability using Markov probability bounds.
\begin{assmp}\label{assmp:qs}
The cost $C(\qsout, V, W)$ of the initial permutation $\qsout$ computed line~\ref{line:qs} of $\samplerankroot$ is at most $O(1)$ times that of the optimal
solution $\pi^*$ to MFAST on $(V,W)$, and the query cost incurred in the computation is $O(n\log n)$.
\end{assmp}

Following $\operatorname{QuickSort}$, a recursive procedure  $\samplerank$ is called.  It implements a 
divide-and-conquer algorithm.
Before branching, it executes the following steps.
%Lines~\ref{line:begintrivial} to~\ref{line:endtrivial} are responsible for the trivial case in which the input is bounded by a constant.
Lines~\ref{line:C0} to~\ref{line:endearly} are responsible for identifying local chaos, with sufficiently high probability.
%the high cost case, which is taken care of using GO GREEDY (see Section~\ref{sec:gogreedy}).
The following line~\ref{line:afterif} calls a procedure $\localimprove$ (Algorithm~\ref{fig:localimprove}) which is responsible for performing
query-efficient approximate greedy steps.  We devote the next Sections~\ref{sec:localimprove}-\ref{sec:localimproveqq} to describing this procedure.
The establishment of the $\eps$-goodness of $\samplerankroot$'s output (establishing (\ref{eq:globalorder})) is deferred to Section~\ref{sec:analysis}.

\negspaceC
\subsection{Approximate local improvement steps}\label{sec:localimprove}
\negspaceD
 The procedure $\localimprove$  takes as input a set $V$ of size $N$, the preference oracle $W$, a permutation $\pi$ on $V$, two
numbers $C_0$, $\eps$ and an integer $n$. The number $n$ is the size of the input in the root call to $\samplerank$, passed down in the recursion,
and used for the purpose of controlling the success probability of each call to the procedure (there are a total of $O(n\log n)$ calls,
and a union bound will be used to bound a failure probability, hence each call may fail with probability inversely polynomial in $n$).
The goal of the procedure is to repeatedly identify, with high probability, single vertex moves that considerably decrease the cost.
 Note that in Mathieu et. al's PTAS \cite{DBLP:conf/stoc/Kenyon-MathieuS07},
a crucial step in their algorithms entails identifying single vertex moves that decrease 
the cost by a magnitude which, given our sought query complexity, would not be detectable.
Hence, our algorithm requires altering this crucial part in their algorithm.

The procedure starts by creating a \emph{sample ensemble} $\S = \{E_{v,i}: v\in V, i \in [B, L]\}$,
where $B = \log\lfloor \Theta(\eps N/\log n)\rfloor$ and $L=\lceil \log N\rceil$.
The size of  each $E_{v,i}\in \S$ is $\Theta(\eps^{-2}\log^2n)$, and each element $(v,x)\in E_{v,i}$ was added (with possible multiplicity) by uniformly at random
selecting, with repetitions, an element $x\in V$
positioned at distance at most $2^i$ from the position of $v$ in $\pi$.  Let $\D_\pi$ denote the distribution space from which $\S$ was drawn,
 and let $\Pr_{X\sim \D_\pi}[X=\S]$ denote the probability of obtaining a given sample ensemble $\S$.

We want $\S$ to enable us to approximate the improvement in cost obtained by moving a single element $u$ to position $j$.  
\begin{defn}\label{defn:successful}
Fix $u\in V$ and $j\in [n]$, and assume $\log|j-\rank_\pi(u)| \geq B$. 
Let $\ell = \lceil \log |j - \rank_\pi(u)|\rceil$.  We say that $\S$ is successful at $u,j$ if
 $\left |\{x: (u,x) \in E_{u,\ell}\} \cap \{x: \rank_\pi(x) \in [\rank_\pi(u), j]\}\right | = \Omega(\eps^{-2}\log^2 n)\ .$
\end{defn}
In words, success of $\S$ at $u,j$ means that sufficiently many samples $x\in V$ such that $\rank_\pi(x)$ is between $\rank_\pi(u)$ and $j$
are represented in $E_{u, \ell}$.
%$\rank_\pi(x)$ is between $\rank_\pi(u)$ and position $j$.
Conditioned on $\S$ being successful at $u,j$, note that  the denominator of $\test_E$ (defined in (\ref{eq:deftestE})) does not vanish,
and we can thereby define:
\begin{defn}
$\S$ is a \emph{good approximation} at $u,j$ if
\negspaceA
$$\left |\test_{E_{u,\ell}}(\pi, V, W, u, j) - \test(\pi, V, W, u, j)\right | \leq \frac 1 2 \eps|j-\rank_\pi(u)| / \log n\ ,$$
where $\ell$ is as in Definition~\ref{defn:successful}.
\end{defn}
In words, $\S$ being a good approximation at $u,j$ allows us to approximate a quantity of interest $\test(\pi, V, W, u, j)$, and to detect
whether it is sufficiently large, and more precisely, at least $\Omega(\eps|j-\rank_\pi(u)|/\log n)$.
\begin{defn}\label{defn:Sgoodapprox}
We say that $\S$ is a good approximation if it is succesful and a good approximation at all $u\in V$, $j\in [n]$ satisfying $\lceil \log  |j-\rank_\pi(u)|\rceil \in [B,L]$.
\end{defn}
\noindent Using \neghspace Chernoff bounds to ensure \neghspace  that $\S$ is  successful  \neghspace $\forall u,j$ \neghspace as in \neghspace Definition \ref{defn:Sgoodapprox}, \neghspace then  \neghspace  using Hoeffding to ensure
that \neghspace $\S$ is a good approximation \neghspace at all such $u,j$ and finally union bounding \neghspace we get
\begin{lem}\label{lem:Sgood}
Except with probability $1-O(n^{-4})$, $\S$ is a good approximation.
\end{lem}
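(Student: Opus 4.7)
The plan is to fix a single pair $(u,j)$ with $\ell := \lceil\log|j-\rank_\pi(u)|\rceil\in[B,L]$, bound the failure probability at that pair by $n^{-c}$ for a large constant $c$, and then union bound over the $O(n^2\log n)$ choices of $(u,j)$.

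First I would handle \emph{successfulness}. Recall that each of the $M=\Theta(\eps^{-2}\log^2 n)$ samples in $E_{u,\ell}$ is drawn by picking a position uniformly in a window of radius $2^\ell$ around $\rank_\pi(u)$ in $\pi$. Since $\ell=\lceil\log|j-\rank_\pi(u)|\rceil$, the target interval $[\rank_\pi(u),j]$ has length at least $2^{\ell-1}$, while the sampling window has length at most $2^{\ell+1}$. Hence each sample lands in the target interval with probability at least a fixed constant (roughly $1/4$). A standard Chernoff bound shows that the number of samples landing in the target is $\Omega(\eps^{-2}\log^2 n)$ except with probability $e^{-\Omega(\eps^{-2}\log^2 n)} = n^{-\omega(1)}$, which is certainly $n^{-c}$ for any desired $c$.

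Next, conditioned on successfulness at $(u,j)$, I would establish \emph{good approximation} via Hoeffding. By the definition of $\test_E$ in (\ref{eq:deftestE}), $\test_E(\pi,V,W,u,j)$ equals $|j-\rank_\pi(u)|$ times the empirical average of the $\pm 1$-bounded quantity $W(u,x)-W(x,u)$ (or its negation, depending on the direction) over the successful subset of samples $x$ drawn uniformly from the target interval $[\rank_\pi(u),j]$. The true value $\test(\pi,V,W,u,j)$ is $|j-\rank_\pi(u)|$ times the population average of the same quantity. With $N'=\Omega(\eps^{-2}\log^2 n)$ effective i.i.d.\ samples, Hoeffding gives
\[
\Pr\left[\,\bigl|\text{emp.\ avg.} - \text{true avg.}\bigr| > \frac{\eps}{2\log n}\,\right] \leq 2\exp\!\bigl(-\Omega(N'\eps^2/\log^2 n)\bigr) = n^{-\omega(1)}.
\]
Multiplying by $|j-\rank_\pi(u)|$ yields exactly the good-approximation inequality. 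A small technicality here is that the samples that land in the target interval are not a priori an i.i.d.\ uniform sample of that interval, but conditioning on their number and on the event that each landed in the target, they are i.i.d.\ uniform over the target, so Hoeffding applies.

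Finally, taking a union bound over the $O(n^2\log n)$ pairs $(u,j)$ (equivalently, over all $u$, all $\ell\in[B,L]$, and all $j$ within distance $2^\ell$ of $\rank_\pi(u)$) and choosing the hidden constant in the sample size $M=\Theta(\eps^{-2}\log^2 n)$ large enough so that each individual failure probability is $O(n^{-6})$ yields the claimed $O(n^{-4})$ overall bound. The main obstacle is really just bookkeeping: aligning the sampling window $[\rank_\pi(u)-2^\ell,\rank_\pi(u)+2^\ell]$ with the one-sided target interval $[\rank_\pi(u),j]$ and correctly accounting for the induced distribution on the successful samples so that Hoeffding is legitimately applicable.
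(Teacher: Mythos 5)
Your plan matches the paper's own (one-sentence) proof sketch exactly: Chernoff for successfulness of each $E_{u,\ell}$, Hoeffding for the deviation bound conditioned on success, then a union bound over all $O(n^2)$ relevant pairs $(u,j)$, and you fill in the same supporting observations one would expect (the target interval $[\rank_\pi(u),j]$ covers a constant fraction of the sampling window of radius $2^\ell$, and the samples that land in the target are conditionally i.i.d.\ uniform on it). One caveat worth flagging, though it is inherited from the paper's stated parameters rather than introduced by you: with $N'=\Theta(\eps^{-2}\log^2 n)$ samples and a required deviation of $\eps/(2\log n)$, Hoeffding yields $\exp(-\Omega(N'\eps^2/\log^2 n))=\exp(-\Omega(1))$, a constant rather than the $n^{-\omega(1)}$ you assert; to make the per-event failure probability $O(n^{-6})$ as you need for the union bound, the sample size must actually be $\Theta(\eps^{-2}\log^3 n)$, so either the paper's $\Theta$-constant is understood to carry a $\log n$ factor or the stated $\log^2 n$ should be $\log^3 n$.
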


\begin{algorithm}[h!]
\caption{$\samplerankroot(V, W, \eps)$}\label{fig:algroot}
\begin{algorithmic}[1]
 \STATE $n\leftarrow |V|$ \label{line:initializen}
 \STATE $\qsout \leftarrow$ Expected $O(1)$-approx solution to MFAST using $O(n \log n)$ $W$-queries  on expectation using QuickSort \cite{DBLP:journals/jacm/AilonCN08} \label{line:qs}% (Appendix~\ref{})\label{line:qs}
  \RETURN $\samplerank(V, W, \eps, n,\qsout)$\label{line:calltosamplerank}
\end{algorithmic}
\label{fig:samplerankroot}
\end{algorithm}

\begin{algorithm}[h!]
\caption{$\samplerank(V, W, \eps, n, \inperm)$}\label{fig:samplerank}
\begin{algorithmic}[1]
% \STATE $N \leftarrow |V|$
% \IF {$N = O(\eps^{-3}\log^3 n)$}\label{line:begintrivial}
%    \STATE query $W(u,v)$ for all $u,v\in V$
%    \STATE RUN MATHIEU-SCHUDY ???
%    \RETURN optimal permutation $\pi$ using brute force computation \label{line:leaf}
% \ENDIF\label{line:endtrivial} 
\STATE $N \leftarrow |V|$
\IF {$N \leq \log n / \log\log n$}\label{line:smallif} \RETURN trivial partition $\{V\}$ \label{line:small} \ENDIF
\STATE $E \leftarrow$ random subset of $O(\eps^{-4}\log n)$ elements from ${V\choose 2}$ (with repetitions) \label{line:C0} 
\STATE $\approxC \leftarrow C_E(\inperm, V, W)$  \;\;\;\;\;  ($C$ is an additive $O(\eps^2 N^2)$ approximation of $C$ w.p. $\geq 1-n^{-4}$)
%$\approxC \leftarrow $ number in range $(C(\inperm, V, W) - O(\eps^2 N^2 ), C(\inperm, V, W) + O(\eps^2 N^2  ))$  computed using $O(\eps^{-4}\log n)$ queries to $W$, with success probability $1-O(n^{-4})$\label{line:C0}
    \IF {$\approxC = \Omega(\eps^2 N^2)$}\label{line:omega}
        \RETURN  trivial partition $\{V\}$  \label{line:early}  %GO GREEDY $(V, W, \eps, n)$ (????????
     \ENDIF\label{line:endearly}
 \STATE $\improveout \leftarrow \localimprove(V, W, \inperm,  \eps, n)$ \label{line:afterif}
 \STATE $k \leftarrow $ random integer in the range $[N/3,2N/3]$ \label{line:k}
 \STATE $V_L \leftarrow \{v\in V: \rank_\pi(v) \leq k\}$, $\piL \leftarrow $ restriction of $\improveout$ to $V_L$ \label{line:recurseL}
 \STATE $V_R \leftarrow V\setminus V_L$,\ \ \ \ \ \ \ \ \ \ \ \ \ \ \ \ \ \  $\piR \leftarrow $ restriction of $\improveout$ to $V_R$\label{line:recurseR}
 \RETURN concatenation of decomposition $\samplerank(V_L, W, \eps, n, \piL)$ and decomposition $\samplerank(V_R, W, \eps, n, \piR)$ \label{line:return}
\end{algorithmic}
\end{algorithm}

\begin{algorithm}[h!]
\caption{$\localimprove(V, W, \pi,  \eps, n)$  ({\emph Note}: $\pi$  used as both input and output)}\label{fig:localimprove}
\begin{algorithmic}[1]
\STATE $N\leftarrow |V|$, $B\leftarrow \lceil \log(\Theta(\eps N/\log n)ד\rceil$, $L \leftarrow \lceil \log N\rceil$\label{line:consts}
 \IF {$N = O(\eps^{-3}\log^3 n)$}
\RETURN  \label{line:exitnothing}  \ENDIF
\FOR{$v\in V$} \label{line:dist1}
 \STATE $r \leftarrow \rank_\pi(v)$
 \FOR{$i = B \dots L$}\label{line:iloop}
   \STATE $E_{v,i}\leftarrow \emptyset$
   \FOR{$m= 1..\Theta(\eps^{-2}\log^2 n)$}
       \STATE $j \leftarrow $ integer uniformly at random chosen from
            $[\max\{1,r-2^i\}, \min\{n,r+2^i\}]$\label{line:clip}
       \STATE $E_{v,i} \leftarrow E_{v,i} \cup \{(v, \pi(j))\}$
   \ENDFOR
 \ENDFOR
\ENDFOR\label{line:dist2}
\WHILE{$\exists u\in V$ and  $j\in [n]$ s.t. \emph{(setting $\ell :=\lceil \log |j-\rank_\pi(u)|\rceil$):}\negspaceA
\begin{eqnarray*}\ell \in [B,L] \mbox{ and }\test_{E_{u,\ell }}(\pi, V, W, u, j) >  \eps|j-\rank_\pi(u)|/\log n\end{eqnarray*}}\label{line:while}
    \FOR{$v\in V$ and $i\in [B,L]$}
       \STATE  refresh sample $E_{v, i}$ with respect to the move $u\rightarrow j$ (see Section~\ref{sec:mutate}) \label{line:mutate} 
    \ENDFOR
    \STATE $\pi \leftarrow \pi_{u\rightarrow j}$
\ENDWHILE\label{line:endwhile}
\end{algorithmic}
\end{algorithm}

\negspaceB
\subsection{Mutating the Pair Sample To Reflect a Single Element Move}\label{sec:mutate}
\negspaceA
Line~\ref{line:mutate} in $\localimprove$ requires elaboration.
  In lines~\ref{line:while}-\ref{line:endwhile}, we check whether there exists an element $u$
and position $j$, such that moving $u$ to $j$ (giving rise to $\pi_{u\rightarrow j}$) would considerably improve the MFAST cost of the
procedure input, based on a high probability approximate
calculation.
 The approximation is done using the sample ensemble $\S$.
If such an element $u$ exists, we execute  the exchange $\pi \leftarrow \pi_{u\rightarrow j}$.  With respect to the new value of the permutation $\pi$,
 the sample ensemble $\S$ becomes \emph{stale}.
By this we mean, that if $\S$ was a good approximation with respect to $\pi$, then it is no longer necessarily a good approximation with respecto to $\pi_{u\rightarrow j}$.
  We must refresh it.
Before the next iteration of the while loop, we perform in line~\ref{line:mutate}
a  transformation $\varphi_{u\rightarrow j}$ to $\S$, so that the resulting sample ensemble $\varphi_{u\rightarrow j}(\S)$ is distributed according to $\D_{\pi_{u\rightarrow j}}$.
More precisely, we will define a transformation $\varphi$ such that
\negspaceA
\def\totvar{d_{\operatorname{TV}}} % total variation
\begin{equation}\label{eq:chain}
\varphi_{u\rightarrow j}(\D_\pi) =  D_{\pi_{u\rightarrow j}}\ ,
\end{equation}
where the left hand side denotes the distribution obtained by drawing from $\D_\pi$ and applying $\varphi_{u\rightarrow j}$ to the result.
The transformation $\varphi_{u\rightarrow j}$ is performed as follows.  Denoting $\varphi_{u\rightarrow j}(\S) = \S' = \{E'_{v,i}: v\in V, i\in [B,L]\}$, 
we need to define each $E'_{v,i}$.
\begin{defn}\label{defn:interesting}
\def\intsetA{T_1}
\def\intsetB{T_2}
We say that $E_{v,i}$ is \emph{interesting} in the context of $\pi$ and $\pi_{u\rightarrow j}$  if the two sets $\intsetA, \intsetB$ defined as
\negspaceeq
\begin{eqnarray}
 \intsetA &=& \{x \in V: |\rank_\pi(x) - \rank_\pi(v)| \leq 2^i \}\\
 \intsetB &=& \{x \in V: |\rank_{\pi_{u\rightarrow j}}(x) - \rank_{\pi_{u\rightarrow j}}(v)| \leq 2^i\}
\end{eqnarray}
differ.
\end{defn}
\noindent
We set $E'_{v,i} = E_{v,i}$ for all  $v,i$ for which  $E_{v,i}$ is \emph{not} interesting.

\begin{obs}\label{obs:interesting}
There are at most
$O(|\rank_\pi(u)-j|\log n)$ interesting choices of $v,i$.
Additionally, if $v\neq u$, then for $T_1, T_2$ as in Definition~\ref{defn:interesting}, $|T_1\Delta T_2| = O(1)$, where $\Delta$ denotes symmetric difference.
\end{obs}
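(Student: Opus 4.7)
The guiding structural observation is that moving $u$ from position $r_u := \rank_\pi(u)$ to position $j$ changes the rank of every element other than $u$ by at most $1$, and leaves ranks unchanged outside the interval $I := [\min\{r_u,j\},\max\{r_u,j\}]$. Equivalently, $\pi_{u\ra j}$ is obtained from $\pi$ by extracting $u$ from slot $r_u$, sliding every element of $I\setminus\{r_u\}$ by one position, and placing $u$ in slot $j$. Hence for any $v\ne u$ with $r=\rank_\pi(v)$, the new rank $r':=\rank_{\pi_{u\ra j}}(v)$ satisfies $r'\in\{r-1,r,r+1\}$.

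For the $|T_1\Delta T_2|=O(1)$ claim, I would rewrite $T_2$ in terms of positions of $\pi$ (rather than of $\pi_{u\ra j}$) by applying the piecewise rule above, turning $T_2$ into a union of at most two $\pi$-position intervals together with a possible singleton $\{u\}$. Comparing this against $T_1=\{\pi(p):p\in[r-2^i,r+2^i]\}$ reveals at most four atomic discrepancies: at most one element may enter or leave at each of the two window boundaries because the interior of $I$ shifts by a single unit, and $u$ itself may enter or leave because it jumps from slot $r_u$ to slot $j$. A short case split on whether $r<r_u$, $r\in I\setminus\{u\}$, or $r>j$ (in the orientation $r_u<j$; the other is symmetric) makes the four-element bound explicit.

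For the counting claim, fix $i\in[B,L]$ and let $d:=|r_u-j|$. Using the same case split, in each region a necessary condition for $(v,i)$ to be interesting restricts $r$ to an interval of length $\min(2^i,d)$: when $r<r_u$ one needs $r+2^i\in[r_u,j-1]$; when $r>j$ symmetric; and when $r\in I\setminus\{u\}$ a window sitting strictly inside $I$ yields $T_1=T_2$ (the shift merely permutes the interior element set), so interestingness forces $r$ within $2^i$ of an endpoint of $I$. This gives $O(\min(2^i,d))$ interesting $r$'s per $i$, and then
\[ \sum_{i=B}^{L}\min(2^i,d)\;=\;\sum_{2^i\le d}2^i\;+\;\sum_{2^i>d}d\;=\;O(d)+O(d\log n)\;=\;O(d\log n), \]
delivering the stated bound $O(|\rank_\pi(u)-j|\log n)$.

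The main obstacle is bookkeeping: when $v\in I$ the two windows are centered at different positions ($r$ in $\pi$ versus $r\pm 1$ in $\pi_{u\ra j}$), so they do not even cover the same absolute slots. The clean fix is to always translate $T_2$ back into $\pi$-indexed slots before comparing, after which both $T_1$ and $T_2$ are explicit unions of $\pi$-position intervals together with the possible additional element $u$, and the symmetric difference can be read off directly.
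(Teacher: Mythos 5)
Your proof is correct. The paper states this as an unproved observation, so there is no official proof to compare against, but your argument is the natural one and matches what the paper implicitly relies on when it later asserts (in the proof of Lemma~\ref{lem:mutationcomplexity}) that for each $i$ there are $O(|\rank_\pi(u)-j|)$ interesting $v$'s. Two small remarks: (i) your per-$i$ bound $O(\min(2^i,d))$ is actually a bit sharper than the paper's flat $O(d)$ per $i$ — both yield $O(d\log n)$ after summing over $i\in[B,L]$, but yours would give the tighter $O(d\log(\log n/\eps))$ if one uses that $d>2^{B-1}$; (ii) in the case split you wrote $r\in I\setminus\{u\}$, which should read $r\in I\setminus\{r_u\}$ (a position, not an element), and you omit any mention of the boundary-clipping issue near positions $1$ and $N$ — the paper handles this with the dummy elements $\tilde v^L_a,\tilde v^R_a$ introduced just before the observation, and your argument goes through verbatim once that formal extension is in place.
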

 Fix one interesting choice $v,i$. Let $T_1, T_2$ be as in Defintion~\ref{defn:interesting}.  By the last observation,
 each of $T_1$ and $T_2$ contains $O(1)$ elements that are not
contained in the other.  Assume $|T_1|=|T_2|$, let $X_1 = T_1\setminus T_2$, and $X_2 = T_2\setminus T_1$.  Fix any injection $\alpha: X_1\rightarrow X_2$, and
extend $\alpha : T_1\rightarrow T_2$ so that $\alpha(x) = x$ for all $x\in T_1\cap T_2$.  Finally, define
\negspaceeq
\begin{equation}\label{eq:mutationcases}
 E'_{v,i} = \{ (v, \alpha(x)) : (v,x) \in E_{v,i}\}\ .
\end{equation}
\def\pseudoright{{\tilde v}^R}
\def\pseudoleft{{\tilde v}^L}
(The case  $|T_1|\neq |T_2|$ may occur due to the clipping of the ranges $[\rank_\pi(v)-2^i, \rank_\pi(v)+2^i]$ and $[\rank_{\pi_{u\rightarrow j}}(v)-2^i, \rank_{\pi_{u\rightarrow j}}ד(v)+2^i]$
to a smaller range.  This is a simple technicality which may be taken care of by formally extending the set $V$ by $N$ additional  elements 
$\pseudoleft_1,\dots, \pseudoleft_N$, extending the definition of $\rho_\pi$ for all permutation $\pi$ on $V$ so that $\rho_\pi(\pseudoleft_a)=-a+1$ for all $a$
and similarly $N=|V|$ additional elements $\pseudoright_1,\dots, \pseudoright_N$ such that $\rho_\pi(\pseudoright_a)=N+a$.
Formally extend $W$ so that $W(v,\pseudoleft_a)=W(\pseudoleft_a,v)=W(v,\pseudoright_a)=W(\pseudoright_a,v)=0$ for all $v\in V$ and $a$.
This eliminates  the need for clipping ranges in line~\ref{line:clip} in $\localimprove$.)

Finally, for $v=u$ we create  $E'_{v,i}$ from scratch by repeating the loop in line~\ref{line:iloop}  for that $v$.
\noindent
\def\dist{\operatorname{dist}}
It is easy to see that (\ref{eq:chain}) holds.  We need, however, something stronger that $(\ref{eq:chain})$.  Since our analysis
assumes that $\S\sim \D_\pi$ is successful, we must be able to measure the distance (in total variation) between the random variable
$(\D_\pi|\mbox{ success})$ defined by the process of drawing from $\D_\pi$ and conditioning on the result's success, and $\D_{\pi_{u\rightarrow j}}$.
By Lemma~\ref{lem:Sgood}, the total variation distance between $(\D_\pi|\mbox{ success})$ and $\D_{\pi_{u\rightarrow j}}$ is $O(n^{-4})$.
Using a simple chain rule argument, we conclude the following:
\begin{lem}\label{lem:chain}
Fix $\pi^0$ on $V$ of size $N$, and fix $u_1,\dots, u_k\in V$ and $j_1,\dots, j_k\in [n]$.
Consider the following process.  We draw $\S^0$ from $\D_{\pi^0}$, and define 
\begin{eqnarray*}
\S^1=\varphi_{u_1\rightarrow j_1}(\S^0), S^2=\varphi_{u_2\rightarrow j_2}(\S^1),&\cdots&, S^k=\varphi_{u_k\rightarrow j_k}(\S^{k-1}) \\
\pi^1 = \pi^0_{u_1\rightarrow j_1}, \pi^2 = \pi^1_{u_2\rightarrow j_2},&\cdots&, \pi^k = \pi^{k-1}_{u_k \rightarrow j_k} \ .
\end{eqnarray*}
Consider the random variable $S^k$ conditioned on $S^0, S^1, \dots, S^{k-1}$ being successful for $\pi_0,\dots, \pi^{k-1}$, respectively.
Then the total variation distance between the distribution of $S^k$ and the distribution $\D_{\pi^k}$ is at most $O(kn^{-4})$.
\end{lem}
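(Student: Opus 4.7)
The plan is to decouple the claim into two independent pieces: (i) the exact pushforward identity~(\ref{eq:chain}), iterated along the chain, which shows that the \emph{unconditional} distribution of $\S^k$ is precisely $\D_{\pi^k}$; and (ii) a union bound showing that the conditioning event has high probability, so that conditioning perturbs the law of $\S^k$ by at most the complement probability in total variation.

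First I would prove by induction on $t$ that, unconditionally, $\S^t \sim \D_{\pi^t}$. The base case $t=0$ is by construction. For the inductive step, the transformation $\varphi_{u_{t+1}\rightarrow j_{t+1}}$ (possibly using fresh internal randomness, in particular for the coordinate $v=u_{t+1}$ where the ensemble entry is re-sampled from scratch) pushes $\D_{\pi^t}$ forward to $\D_{\pi^{t+1}}$ by equation~(\ref{eq:chain}); combined with the inductive hypothesis, this yields $\S^{t+1}\sim \D_{\pi^{t+1}}$.

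Next, let $A_t$ denote the event that $\S^t$ is successful at every $u,j$ with respect to $\pi^t$, in the sense of Definition~\ref{defn:successful}. The marginal identity above together with Lemma~\ref{lem:Sgood} gives $\Pr[\neg A_t] = O(n^{-4})$ for every $t\in\{0,\dots,k-1\}$, and a union bound yields $\Pr[A] \geq 1 - O(k n^{-4})$ where $A := \bigcap_{t=0}^{k-1} A_t$.

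Finally, let $P$ denote the unconditional law of $\S^k$, which equals $\D_{\pi^k}$ by the first step, and let $Q$ denote the law of $\S^k$ conditioned on $A$. The elementary inequality $d_{\operatorname{TV}}(P(\,\cdot\mid A),\, P) \leq \Pr[A^c]$ then gives $d_{\operatorname{TV}}(Q, \D_{\pi^k}) = O(k n^{-4})$, as required. The only genuinely non-mechanical ingredient is the pushforward identity~(\ref{eq:chain}), which is already justified by the $\alpha$-matching construction on the symmetric differences $T_1 \triangle T_2$ in Section~\ref{sec:mutate}; everything else is bookkeeping and a union bound, so I do not anticipate any substantive obstacle.
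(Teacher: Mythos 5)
Your proof is correct and follows the same route the paper intends: the exact pushforward identity~(\ref{eq:chain}) gives that the unconditional law of $\S^k$ is $\D_{\pi^k}$, and the conditioning error is bounded by the probability of the complementary failure event, with a union bound over the $k$ mutation steps supplying the factor of $k$. The paper invokes a ``simple chain rule argument'' iterating a one-step total variation bound, whereas you condition once on the global success event $A=\bigcap_t A_t$ after a union bound; these are equivalent formalizations of the same idea, and yours is a clean way to make the paper's sketch precise.
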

\negspaceC
\subsection{Bounding the query complexity of computing $\varphi_{u\rightarrow j}(\S)$}\label{sec:localimproveqq}
\negspaceD
We now need a notion of distance between $\S$ and $\S'$,  measuring how many extra pairs were introduced ino the new
sample family.  These pairs may incur the cost of querying $W$.  We denote this measure as $\dist(\S, \S')$, and define it as
  $\dist(\S, \S') := \left | \bigcup_{v,i} E_{v,i}\Delta E'_{v,i} \right |\ .$
%where $\Delta$ denotes set symmetric difference.
\begin{lem}\label{lem:mutationcomplexity}
Assume $\S \sim \D_{\pi}$ for some permutation $\pi$, and $\S' = \varphi_{u\rightarrow j}$.  Then
$ \E[\dist(\S,\S')] = O(\eps^{-3}\log^3 n)$.  %O(\eps^{-3}|\rank_\pi(u) - j| \log^3 n/N + \eps^{-2}\log^3 n)\ .$
\end{lem}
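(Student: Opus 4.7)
The plan is to split $\dist(\S,\S') = |\bigcup_{v,i} E_{v,i}\Delta E'_{v,i}|$ by the identity $\sum_{v,i}|E_{v,i}\Delta E'_{v,i}|$ and bound the contribution from each pair $(v,i)$, noting that only \emph{interesting} pairs (in the sense of Definition~\ref{defn:interesting}) can contribute. I would handle the case $v=u$ separately from the case $v\neq u$, since for $v=u$ the sample $E'_{u,i}$ is drawn afresh, while for $v\neq u$ the transformation $\varphi_{u\to j}$ only rewrites those samples whose second coordinate falls in the (tiny) symmetric difference $X_1 = T_1\setminus T_2$.

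For $v=u$ the crude bound $|E_{u,i}\Delta E'_{u,i}|\le |E_{u,i}|+|E'_{u,i}| = O(\eps^{-2}\log^2 n)$ holds deterministically, and summing over $i\in[B,L]$ gives $O(\eps^{-2}\log^3 n)$. For an interesting $(v,i)$ with $v\neq u$, Observation~\ref{obs:interesting} gives $|X_1|=O(1)$, while $|T_1|=\Theta(2^i)$ by definition of the $2^i$-neighborhood (using the pseudo-element extension introduced after Observation~\ref{obs:interesting} to avoid clipping). Since each of the $\Theta(\eps^{-2}\log^2 n)$ samples in $E_{v,i}$ is drawn uniformly from $T_1$, each falls in $X_1$ independently with probability $O(2^{-i})$, so by linearity of expectation
\[
\E[|E_{v,i}\Delta E'_{v,i}|]\;=\;O(\eps^{-2}\log^2 n\cdot 2^{-i})\ .
\]

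The remaining task is to sum this over interesting $(v,i)$ with $v\neq u$. Let $d=|\rank_\pi(u)-j|$. A direct case analysis of the shift operation $\pi\mapsto\pi_{u\to j}$ (the same analysis that underlies Observation~\ref{obs:interesting}) shows that at scale $i$, the number of interesting $v$ is $O(d+2^i)$ when $2^i<d/2$ and $O(d)$ when $2^i\ge d/2$. Plugging in, the total expected contribution from $v\neq u$ is
\[
O(\eps^{-2}\log^2 n)\left[\sum_{B\le i<\log(d/2)} (d/2^i + 1) \;+\; \sum_{i\ge \log(d/2)} d/2^i\right].
\]
The second sum is a geometric series bounded by $O(1)$. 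For the first, the crucial observation is that $2^B=\Theta(\eps N/\log n)$, so $d/2^B = O(N/2^B)=O(\eps^{-1}\log n)$, dominating the $\log n$ from the ``$+1$'' terms. Thus the bracket is $O(\eps^{-1}\log n)$ and the total is $O(\eps^{-3}\log^3 n)$, which dominates the $O(\eps^{-2}\log^3 n)$ contribution from $v=u$, proving the claim.

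The main obstacle is the bookkeeping of interesting $(v,i)$ across scales: one must be careful that for scales with $2^i\ll d$ there are many interesting $v$ but each contributes little (because $2^{-i}$ is large but $2^{-i}\cdot d$ is also absorbed by the geometric tail), while for scales $2^i\gtrsim d$ the count is $O(d)$ but the $2^{-i}$ factor kills the tail. The tightness of $O(\eps^{-3}\log^3 n)$ (rather than $O(\eps^{-2}\log^3 n)$) comes precisely from the lower cutoff $B$ on the scales, which forces the bound $d/2^B=O(\eps^{-1}\log n)$ to enter.
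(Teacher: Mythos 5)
Your proof is correct and follows essentially the same route as the paper: restrict attention to interesting pairs $(v,i)$, bound $\E[|E_{v,i}\Delta E'_{v,i}|]$ by $O(\eps^{-2}\log^2 n\cdot 2^{-i})$ for each one (you do this via linearity of expectation, the paper via a slightly more roundabout probability-of-nonemptiness estimate, but the quantities are the same), count $O(d)$ interesting $v$ per scale (your $O(d+2^i)$ for $2^i<d/2$ is just $O(d)$, matching the paper's $2d$ bound), and observe that the geometric sum $\sum_{i\ge B}2^{-i}$ is dominated by $2^{-B}=O(\log n/(\eps N))$, which together with $d\le N$ closes the bound. Your closing remark that the extra factor of $\eps^{-1}$ enters precisely through the cutoff $B$ is the right structural observation and is implicit in the paper's derivation as well.
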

\begin{proof}
Denote $\S = \{E_{v,i}\}$ and $\S' = \{E'_{v,i}\}$.
Fix some $v\neq u$.  By construction, the  sets $E_{v,i}$ for which $E_{v,i} \neq E'_{v,i}$ must be interesting, and there are at  most $O(|\rank_\pi(u)-j|\log n)$ such, using Observation~\ref{obs:interesting}.
Fix such a choice of  $v,i$.  By (\ref{eq:mutationcases}), $E_{v,i}$ will indeed differ from $E'_{v,i}$ only if it contains an element $(v, x)$ for some $x\in T_1\setminus T_2$.
But the probability of that is at most
 $$1 - (1-O(2^{-i}))^{\Theta(\eps^{-2}\log^2 n)} \leq 1-e^{-\Theta(\eps^{-2}2^{-i}\log^2 n)} = O(\eps^{-2}2^{-i}\log^2 n) \ $$
(We used the fact that $i\geq B$, where $B$ is as defined in 
line~\ref{line:consts} of $\localimprove$, and $N =\Omega(\eps^{-3}\log^3 n)$ as guaranteed in line~\ref{line:exitnothing} of $\localimprove$.)
 Therefore, the expected size of $E'_{v,i}\Delta E_{v,i}$ (counted with multiplicities)  is $O(\eps^{-2}2^{-i}\log^2 n)$.

Now consider all the interesting sets $E_{v_1. i_1},\dots, E_{v_P, i_P}$.  For each possible value $i$ it is easy to see that there are at most $2|\rank_\pi(u)-j|$ $p$'s for which $i_p = i$.
Therefore,
$ \E\left [\sum_{p=1}^P |E'_{v_p,i_p}\Delta E_{v_p,i_p}|\right ] = O\left(\eps^{-2}|\rank_\pi(u) - j|\log^2 n\sum_{i=B}^L2^{-i}\right),$
where $B,L$ are defined in line~\ref{line:consts} in $\localimprove$.  Summing over $i\in [B,L]$, we get at most $O(\eps^{-3}|\rank_\pi(u) - j|\log^3 n / N)$.
For $v=u$, the set $\{E_{v,i}\}$ is drawn from scratch, clearly contributing $O(\eps^{-2}\log^3 n)$ to $\dist(\S, \S')$.
   The claim follows.
\end{proof}

\negspaceB
\subsection{Analysis of $\samplerank$}\label{sec:analysis}
\negspaceA
Throughout the execution of the algorithm, various \emph{high probability} events must occur in order for the algorithm guarantees to hold.
Let $\S_1, \S_2,\dots$ denote the sample families that are given rise to through the executions of $\localimprove$, either between lines~\ref{line:dist1} and~\ref{line:dist2}, or
as a mutation done between lines~\ref{line:while} and~\ref{line:endwhile}.  We will need the first $\Theta(n^4)$   to be good approximations, based on Definition~\ref{defn:Sgoodapprox}.
Denote this favorable event $\ES$.
By Lemma~\ref{lem:chain}, and using a union bound, with constant probability (say, $0.99$) this happens.
 We also need the cost approximation $C$  obtained in line~\ref{line:C0} to be successful.  Denote this favorable event $\Ecostapprox$.
By Hoeffding tail bounds, this happens with probability $1-O(n^{-4})$ for each
execution of the line.  This line is  obviously executed at most $O(n\log n)$ times, and hence we can lower bound the probability of success of all
executions by  $0.99$.
%Finally, consider all recursive calls to $\samplerank$ exiting in line~\ref{line:early}.  For these inputs, we say that
%the recursive call is a success if GO GREEDY SUCCEEDS.
%Let $\Evc$ denote the event that all such calls are a success.  There are at most $O(n\log n)$ such calls.  REFERENCE TO
%A THEOREM GUARANTEEING SUCCESS PROBABILITY OF
%GO GREEDY.

From now throughout, we make the following assumption, which is true by the above with probability at least $0.97$.
\begin{assmp}\label{assmp:events}
% Events $\ES$, $\Ecostapprox$ and $\Evc$ hold true.
Events $\ES$ and $\Ecostapprox$ hold true.
\end{assmp}
Note that by conditioning the remainder of our analysis on this assumption may bias some expectation upper bounds derived earlier and in what follows.  This
bias can multiply the estimates by at most $1/0.97$, which can be absorbed in the $O$-notation of these bounds.

Let $\pi^*$ denote the optimal permutation for the root call to $\samplerank$ with $V,W,\eps$.
The permutation $\qsout$ is, by Assumption~\ref{assmp:qs},
 a constant factor approximation for MFAST on $V,W$.
Using the triangle inequality, we conclude that $\dtau(\qsout, \pi^*) \leq C(\qsout, V, W) + C(\pi^*, V, W)$  Hence,
 $E[\dtau(\qsout, \pi^*)] = O(C(\pi^*, V, W))\ .$
From this we conclude, using (\ref{eq:diaconis}), that
$$E[\dfoot(\qsout, \pi^*)] = O(C(\pi^*, V, W))\ .$$

Now consider the recursion tree $\tree$ of $\samplerank$.
Denote $\internal$ the
set of internal nodes, and by $\leaf$ the set of leaves (i.e. executions
exiting from line~\ref{line:early}).
For a call $\samplerank$ corresponding to a node $X$ in the recursion tree, denote the input arguments by $(V_X, W, \eps, n, \pi_X)$.  
Let $\leftc{X},\rightc{X}$ denote the  left and right children of $X$
respectively.  Let $k_X$ denote the integer $k$ in~\ref{line:k}
in the context of $X\in \internal$.  Hence, by our definitions, $V_{\leftc{X}}, V_{\rightc{X}},\pi_{\leftc{X}}$ and  $\pi_{\rightc{X}}$ are
precisely $V_L, V_R, \pi_L, \pi_R$ from lines~\ref{line:recurseL}-\ref{line:recurseR} in the context
of node $X$.

 Take, as in line~\ref{line:consts}, $N_X=|V_X|$.  Let $\pi^*_X$ denote the
optimal MFAST solution for instance $(V_X, W_{|V_X})$.
By $\ES$ we conclude that the first $\Theta(n^4)$ times in which we iterate through the while loop in $\localimprove$  (counted over all calls to $\localimprove$), the cost of ${\pi_{X}}_{u\rightarrow j}$
is an actual improvement compared to $\pi_X$ (for the current value of $\pi_X, u$ and $j$ in iteration), and the improvement in cost
is of magnitude at least
$\Omega(\eps|\rank_{\pi_X}(u)- j|/\log n),$ which is $\Omega(\eps^2 N_X/\log^2 n)$ due to the use of $B$ defined in line~\ref{line:consts}.  But this means that the
number of iterations
%But this means that the value of $|\rank_\pi(u)-j|$
%summed over all the values of $u,j,\pi$ that trigger 
of the while loop in line~\ref{line:while} of $\localimprove$
is $O(\eps^{-2}C(\pi_X, V_X, W_{|V_X})\log^2 n/N_X)$. 
 Indeed, otherwise the true cost of the running solution would go below $0$.  Since $C(\pi_X, V_X, W_{|V_X})$ is at most ${N_X\choose 2}$, the number of iterations is hence at most $O(\eps^{-2}N_X\log^2 n)$.
By Lemma~\ref{lem:mutationcomplexity} the
expected  query complexity incurred by the call to $\localimprove$ is therefore
$ O(\eps^{-5}N_X \log^5 n )$.  Summing over the recursion tree, the total query complexity incurred by calls to $\localimprove$ is, on expectation,
at most $O(\eps^{-5} n\log^6 n)$.
%Summing over the recursion tree, the total expected query
%complexity incurred by all calls to $\localimprove$ is
%$O(\eps^{-5}n\log^6 n)$.

%Since trivially $C(\pi,V,W) \leq {N\choose 2}$, we conclude
%that the total expected query complexity incurred by a single call to $\localimprove$ is $O(\eps^{-4} N\log^4 n)$.  

\def\Vshort{V^{\operatorname{short}}}
\def\Vlong{V^{\operatorname{long}}}
\def\pio{{\pi_1}}
Now consider the moment at which the while loop of $\localimprove$ terminates.    Let $\improveout_X$ denote the permutation obtained at that point, returned to $\samplerank$
in line~\ref{line:afterif}.
We classify the elements $v\in V_X$ to two families:  $\Vshort_X$ denotes all $u\in V_X$ s.t.
 $|\rank_{\improveout_X}(u) - \rank_{\pi^*_X}(u)| =O(\eps N_X / \log n)$,
and $\Vlong_X$ denotes $V_X\setminus \Vshort_X$.
We know, by assumption, that the last sample ensemble $\S$ used in $\localimprove$ was a good approximation, hence for all $u\in \Vlong_X$,
\negspaceeq
\begin{equation}\label{eq:testbound}
\test(\improveout_X, V_X, W_{|V_X}, u, \rank_{\pi^*_X}(u)) = O(\eps|\rank_{\improveout_X}(u)-\rank_{\pi^*_X}(u)|/\log n). % If $|p - p^* |\leq 2^B$ then $\test(\pi, V, W, u, p^*) = O(|p-p^*|)$, trivially.
\end{equation}
\begin{defn}[Kenyon and Schudy \cite{DBLP:conf/stoc/Kenyon-MathieuS07}]
For $u\in V_X$, we say that $u$ crosses $k_X$ if the interval  $[\rank_{\improveout_X}(u), \rank_{\pi^*_X}(u)]$ contains the integer $k_X$.  % (chosen in line~\ref{line:k}) in $\samplerank$).
\end{defn}
\def\Vcross{V^{\operatorname{cross}}}
Let $\Vcross_X$ denote the (random) set of elements $u\in V_X$ that cross $k_X$ as chosen in line~\ref{line:k}.  We define a key quantity $T_X$ as in \cite{DBLP:conf/stoc/Kenyon-MathieuS07} as follows:
\begin{equation}\label{eq:defT}
 T_X = \sum_{u\in \Vcross_X} \test(\pio_X, V_X, W_{|V_X}, u, \rank_{\pi^*_X}(u)) \ .
 \end{equation}

Following (\ref{eq:testbound}), the elements $u\in \Vlong_X$ can contribute at most $$O\left (\eps\sum_{u\in \Vlong_X}|\rank_{\pio_X}(u) - \rank_{\pi^*_X}(u)| /\log n\right )$$ to $T_X$.
 Hence   the total contribution from such elements is, by definition $O(\eps\dfoot(\pio_X, \pi^*_X)/\log n)$ which is, using (\ref{eq:diaconis}) at most $O(\eps \dtau(\pio_X, \pi^*_X)/\log n)$.
Using the triangle inequality and the definition of $\pi^*_X$, the last expression, in turn, is
at most $O(\eps C(\pio_X, V_X, W_{|V_X})/\log n)$.

We now bound the contribution of the elements $u\in \Vshort_X$ to $T_X$.  The probability of each such element to cross $k$ is $O(|\rank_{\pio_X}(u) - \rank_{\pi^*_X}(u)|/N_X)$.
Hence, the total expected contribution of these elements to $T_X$ is \begin{equation}\label{eq:short} O\left (\sum_{u\in \Vshort_X} |\rank_{\pio_X}(u) - \rank_{\pi^*_X}(u)|^2/N_X \right )\ .\end{equation}
Under the constraints $\sum_{u\in \Vshort_X} |\rank_{\pio_X}(u) - \rank_{\pi^*_X}(u)| \leq \dfoot(\pio_X, \pi^*_X)$ and $ |\rank_{\pio_X}(u) - \rank_{\pi^*_X}(u)| =O(\eps N_X/\log n)$, the maximal value
of (\ref{eq:short}) is $$O(\dfoot(\pio_X, \pi^*_X) \eps N_X/(N_X \log n)) = O(\dfoot(\pio_X, \pi^*_X) \eps /\log n)\ .$$  Again using (\ref{eq:diaconis}) and the triangle
inequality, the last expression is  $O(\eps C(\pio_X, V_X, W_{|V_X}) / \log n)$.

\noindent
Combining the accounting for $\Vlong$ and $\Vshort$, we conclude
\begin{equation}\label{eq:Tbound}
E_{k_X}[T_X] = O(\eps C(\pi^*_X, V_X, W_{|V_X})/\log n)\ ,
\end{equation}
where the expectation is over 
the choice of $k_X$ in line~\ref{line:k} of $\samplerank$.

We are now in a  position to use a key Lemma from Kenyon et al's work \cite{DBLP:conf/stoc/Kenyon-MathieuS07}.  First we need a definition:  Consider the optimal solution $\pi'_X$ respecting ${V_{\leftc{X}}}, {V_\rightc{X}}$ in lines~\ref{line:recurseL} and~\ref{line:recurseR}.  By this we mean that $\pi'_X$ must rank all of the elements in ${V_X}_L$
before (to the left of) ${V_R}_X$. For the sake of brevity, let $C^*_X$ be shorthand for $C(\pi^*_X, V_X, W_{|V_X})$ 
and $C'_X$ for $C(\pi^;_X, V_X, W_{|V_X})$.

\begin{lem}\label{lem:fromkenschud}[Kenyon and Schudy \cite{DBLP:conf/stoc/Kenyon-MathieuS07}]
 %Let $C' = C(\pi', V, W).$  
With respect to the distribution of the number $k_X$ in line~\ref{line:k} of $\samplerank$,
\negspaceeq \begin{equation}\label{eq:key}
 E[C'_X] \leq O\left (\frac {\dfoot(\pio_X, \pi^*_X)^{3/2}}{N_X}\right ) + E[T_X] + C^*_X\ .
\end{equation}
\end{lem}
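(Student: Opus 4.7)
The plan is to upper bound $C'_X$ by exhibiting an explicit permutation $\bar\pi$ that respects the decomposition $V_{\leftc{X}}, V_{\rightc{X}}$, so that $C'_X \leq C(\bar\pi, V_X, W_{|V_X})$ by optimality of $\pi'_X$. I would build $\bar\pi$ from $\pi^*_X$ by relocating each crossing element $u \in \Vcross_X$ out of its position in $\pi^*_X$ to a position on the correct side of $k_X$ — for concreteness, to position $\rank_{\pio_X}(u)$ — while leaving the relative order of the non-crossing elements undisturbed. By the definition of $\Vcross_X$ together with the definitions of $V_{\leftc{X}}$ and $V_{\rightc{X}}$, these moves produce a permutation consistent with the decomposition.

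I would then estimate $C(\bar\pi, V_X, W_{|V_X}) - C^*_X$ move by move. Each individual relocation of a crossing element $u$ in $\pi^*_X$ to $\rank_{\pio_X}(u)$ changes the cost by $-\test(\pi^*_X, V_X, W_{|V_X}, u, \rank_{\pio_X}(u))$. Summing over $u \in \Vcross_X$ gives a base bound, modulo pairwise interactions between simultaneous moves (two crossing elements whose sweep intervals overlap can double-count); those interaction terms are bounded by the count of overlapping crossing pairs, which is controlled globally by $\dfoot(\pio_X, \pi^*_X)$ and will feed into the $\dfoot^{3/2}/N_X$ remainder.

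The crucial substitution is to exchange each $-\test(\pi^*_X, V_X, W_{|V_X}, u, \rank_{\pio_X}(u))$ for $\test(\pio_X, V_X, W_{|V_X}, u, \rank_{\pi^*_X}(u))$, which is the summand that defines $T_X$. Writing both expressions out using the formula for $\test$, one sees that they differ only by contributions from elements $y$ that sit in the sweep interval of $u$ under one of $\pio_X, \pi^*_X$ but not the other; each such $y$ contributes at most $O(1)$ in absolute value, and the count, summed over $u \in \Vcross_X$, is bounded by the footrule distance between $\pio_X$ and $\pi^*_X$ restricted to those intervals. Using that $k_X$ is uniform in $[N_X/3, 2N_X/3]$, so that $\E|\Vcross_X|$ is of order $\dfoot(\pio_X, \pi^*_X)/N_X$, and combining this with the per-element error budget via Cauchy--Schwarz produces the target $O(\dfoot(\pio_X, \pi^*_X)^{3/2}/N_X)$ term after taking expectation.

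The main obstacle is the last step: producing the non-integer exponent $3/2$ (rather than a cruder $\dfoot^2/N_X$ bound obtained by bounding each per-element error by $\dfoot$ itself) requires balancing the random number of crossing elements against the per-element error via Cauchy--Schwarz, and carefully verifying that the averaging over $k_X$ concentrates the mass in the manner that this calculation needs. This is essentially the key combinatorial lemma of Kenyon and Schudy's PTAS \cite{DBLP:conf/stoc/Kenyon-MathieuS07}, and I would import that argument without modification, since the structural properties it uses about $\pio_X$ relative to $\pi^*_X$ — namely, that $\pio_X$ admits no large single-element improvement toward $\pi^*_X$ beyond the threshold enforced by $\localimprove$ — continue to hold in our setting.
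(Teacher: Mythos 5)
The paper does not actually prove this lemma: it is imported wholesale from Kenyon and Schudy, with the citation serving in place of a proof. You explicitly recognize this at the end of your proposal, so in effect you and the paper take the same route — delegate to the external result. Your sketch of what that external proof looks like is structurally sound: build a split-respecting permutation $\bar\pi$ from $\pi^*_X$ by relocating the crossing elements, bound $C'_X \le C(\bar\pi, V_X, W_{|V_X})$, account for single-element moves via $\test$, trade $-\test(\pi^*_X, \cdot, u, \rank_{\pio_X}(u))$ for $\test(\pio_X, \cdot, u, \rank_{\pi^*_X}(u))$ (the summand of $T_X$) with an error controlled by the symmetric difference of the two sweep intervals, and absorb the interaction terms into the $\dfoot^{3/2}/N_X$ remainder via the randomness of $k_X$.

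Two small cautions on the sketch, neither fatal since you ultimately defer to the original argument. First, "move $u$ to position $\rank_{\pio_X}(u)$" is ambiguous once multiple relocations are performed — positions are with respect to a permutation that is changing underfoot — and one must fix a convention (Kenyon--Schudy's actual construction places crossing vertices adjacent to the split point rather than at their $\pio_X$ rank) before the claim that $\bar\pi$ respects $V_{\leftc{X}}, V_{\rightc{X}}$ and the per-move accounting via $\test$ are both simultaneously correct. Second, the statement that the substitution error "is bounded by the footrule distance between $\pio_X$ and $\pi^*_X$ restricted to those intervals" and the $3/2$ exponent via Cauchy--Schwarz are exactly the nontrivial content of the lemma; the sketch does not establish it, and you correctly flag that you would import that computation verbatim. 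So the proposal is an accurate high-level reconstruction of the cited argument, not an independent proof — which matches how the paper itself treats the statement.
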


Using (\ref{eq:diaconis}), we can replace $\dfoot(\pio_X, \pi^*_X)$ with $\dtau(\pio_X, \pi^*_X)$ in (\ref{eq:key}).  Using the triangle inequality, we
can then, in turn, replace $\dtau(\pio_X, \pi^*_X)$ with $C(\pio_X, V_X, W_{|V_X})$.

%The implication of (\ref{eq:key}) for the purpose of establishing (\ref{eq:globalorder}) is clear:  Compared to the minimizer
%Consider  the 
%$\pi^*$ of the RHS of (\ref{eq:globalorder}),
%the permutation $\pi'$ pays an additional cost incurred by the binary decomposition of the input to $V_L$ and $V_R$.  This
%additional cost is given by $\sum_{(u,v)\in V_L\times V_R} \one_{W(v,u)=1}$.  By adding this expression over the recursive
%binary decompositions in the execution, we get the LHS of (\ref{eq:globalorder}).  We next show how to bound these contributions.

\subsection{Summing  Over the Recursion Tree}

Let us study the implication of  (\ref{eq:key}) for our purpose.
Recall that $\{V_1,\dots, V_k\}$ is the decomposition returned
by $\samplerankroot$, where  each $V_i$ corresponds to a leaf
in the recursion tree. %(Note the slight abuse of notation committed by letting
%$1,\dots, k$ denote the names of the recursion leaves, ordered from left to right).  
Also recall that  $\hat \pi$ denotes the minimizer of $C(\cdot, V, W)$ over all    permutations in $\Pi(V_1,\dots, V_k)$ respecting the decomposition.  % ($u \prec_{\hat \pi} v$ whenever $u\in V_i, v\in V_j$ for $i<j$) which also minimizes $C(\cdot_{|V_i}, V_i, W_{|V_i})$ for each $i$.  
Given Assumption~\ref{assmp:events} it suffices, for our purposes, to  show  that $\hat \pi$ is a (relative) small approximation for MFAST on $V,W$.
%, and also property (\ref{eq:globalorder})
%for the outputted decomposition.  We achieve both goals by summing up contributions %to the LHS of (\ref{eq:globalorder}) 
%from all levels of the recursion.
Our analysis of this account is basically that of \cite{DBLP:conf/stoc/Kenyon-MathieuS07}, with slight changes stemming from bounds
we derive on $E[T_X]$.  We present the proof in full detail for the sake of completeness.
%Consider the recursion tree for $\samplerank$.  
Let $\root$ denote the root node.

For $X\in \internal$, let $\beta_X$ denote the contribution
of the split $\leftc{X}, \rightc{X}$ to the LHS of (\ref{eq:globalorder}).  More precisely, 
$$ \beta_X = \sum_{u \in \leftc{X}, v\in \rightc{X}} \one_{W(v,u)=1}\ ,$$
so we get $\sum_{1\leq i<j\leq k}\sum_{(u,v)\in V_i\times V_j} \one_{W(v,u)=1} = \sum_{X\in \internal} \beta_X$.
\noindent
%By our definitions, we also have
%\begin{eqnarray}
%C(\hat \pi, V, W) &=& \sum_{X\in \leaf} C^*_X +  \sum_{X\in \internal} \beta_X\ .
 %\end{eqnarray}

For any $X\in \internal$, note also that by our definitions $\beta_X = C'_X - C^*_{\leftc{X}} - C^*_{\rightc{X}}$.
Hence, using Lemma~\ref{lem:fromkenschud} and the
ensuing comment,
$$ E[\beta_X] \leq O\left (E\left [ \frac {C(\improveout_X, V_X, W_{|V_X})^{3/2}}{N_X}\right ]\right ) + E[T_X] + E[C^*_X] - E[C^*_{\leftc{X}}] - E[C^*_{\rightc{X}}] \ ,$$
where the expectations are over the enitre space of random decisions made by the algorithm execution.
Summing the last inequality over $X\in \internal$, we get
(minding the cancellations):
\begin{eqnarray}\label{eq:difficult}
E\left[\sum_{X\in \internal} \beta_X\right] &\leq& O\left (\sum_{X\in \internal}E\left [ \frac {C(\improveout_X, V_X, W_{|V_X})^{3/2}}{N_X}\right ]\right ) + E\left[\sum_{X\in \internal}T_X\right] + C^*_\root - \sum_{X\in \leaf} E[C^*_X]\ .
\end{eqnarray}

The expression $E[\sum_{X\in\internal} T_X]$ is bounded by
$O\left(E\left[\sum_{X\in\internal} \eps \sum C^*_X/\log n\right]\right)$ using (\ref{eq:Tbound}) (which depends on Assumption~\ref{assmp:events}).   Clearly the sum
of $C^*_X$ for $X$ ranging over nodes $X\in \internal$
in a particular level is at most $C(\pi_\root, V, W)$
(again using  Assumption~\ref{assmp:events} to assert
that the cost of $\improveout_X$ is less than the cost of $\pi_X$ at each node $X$).  By taking Assumption~\ref{assmp:qs} into account, $C(\pi_\root, V, W)$
is $O(C^*_\root)$.  Hence, summing over all $O(\log n)$ levels,
\begin{equation}
E\left [\sum_{X\in\internal} T_X\right  ] = O(\eps C^*_\root)\ .
\end{equation}
Let $\costout_X = C(\improveout_X, V_X, W_{|V_X})$ for all
$x\in \internal$.
Denote by $F$ the expression in the $O$-notation of the first summand in the RHS of (\ref{eq:difficult}), more precisely:
\begin{equation}\label{eq:Fbound} F = \sum_{X\in\internal} E\left [ \frac{\costout_X^{3/2}}{N_X}\right]\ , \end{equation}
where we remind the reader that $N_X=|V_X|$.
It will suffice to show that under Assumption~\ref{assmp:events}, the following inequality holds with probability $1$:
\begin{equation}\label{eq:Gbound}
G\left ((\costout_X)_{X\in \internal}, (N_X)_{X\in\internal}\right) := \sum_{X\in\internal} \costout_X^{3/2}/N_X \leq c_3\eps \costout_\root\ ,
\end{equation}
where $c_3>0$ is some global constant.
This turns out to require a bit of elementary calculus.  A complete proof of this assertion is not included in \cite{DBLP:conf/stoc/Kenyon-MathieuS07}, which
is an extened abstract.  We present a version of the proof here for the sake of completeness.

Under assumption~\ref{assmp:events}, the following two constraints
hold uniformly for all $X\in \internal$ with probability $1$:   Letting $C_X = C(\pi_X, V_X, W_{|V_X})$,
  \begin{enumerate}
\item[$(A1)$]
If $X$ is other than $\root$, let
$Y$ be its sibling and $P$ their parent.  In case $Y\in\internal$: \begin{equation}\label{eq:boundchildren}
\costout_X+\costout_Y \leq\costout_P\ .\end{equation}
   (In case $Y\in\leaf$, we simply have that $\costout_X \leq \costout_P$.\footnote{We can say something stronger in this case, but we won't need it here.})
To see this, notice that $\costout_X \leq C_X $, and similarly, in case $Y\in\internal$, $\costout_Y\leq C_Y$.  Clearly
$C_X+C_Y \leq \costout_P$, because $\pi_X, \pi_Y$ are
simply  restrictions of $\improveout_P$ to disjoint blocks of $V_P$.  The required inequality (\ref{eq:boundchildren}) is proven.
\item[$(A2)$] $\costout_X \leq c_2 \eps^2 N_X^2$ for some global $c_2>0$.  
\end{enumerate}

In order to show (\ref{eq:Gbound}), we may increase the values  $\costout_X$ for $X\neq \root$  in the following manner:  Start with the root node.
If it has no children, there is nothing to do because then $G=0$. If it has only  one child $X\in\internal$, continuously increase $\costout_X$ until
either $\costout_X = \costout_\root$ (making $(A1)$ tight) or $\costout_X = c_2\eps^2 N_X^2$ (making $(A2)$ above tight).  Then recurse on the subtree rooted by $X$.  In case $\root$ has two children
$X,Y\in\internal$ (say, $X$ on left), continuously increase $\costout_X$ until either $\costout_X+\costout_Y = \costout_\root$ ($(A1)$ tight) or until $\costout_X=c_2\eps^2 N_X^2$ ($(A2)$ tight) .  Then do the same for $\costout_Y$, namely, increase it until $(A1)$ is tight or until $\costout_Y = c_2\eps^2 N_Y^2$ ($(A2)$ tight).  Recursively perform
the same procedure for the subtrees rooted by $X,Y$.

After performing the above procedure, let $\internal_1$ denote the set of internal nodes $X$  for which $(A1)$ is tight, namely, either the sibling $Y$
of $X$ is a leaf and $\costout_X=\costout_P$ (where $P$ is $X$'s parent) or the sibling $Y\in\internal$ and $\costout_X+\costout_Y=\costout_P$ (in which case also $Y\in\internal_1$).  Let $\internal_2 = \internal \setminus \internal_1$.  By our construction, for all $X\in\internal_2$, $\costout_X = c_2\eps^2 N_X^2$.

Note that if $X\in\internal_2$ then its children (more precisely,  those in $\internal$) cannot be in $\internal_1$.  Indeed, this would violate $(A2)$ for at least one child, in virtue of the fact that $N_Y$ lies in the range $[N_X/3, 2N_X/3]$ for any child $Y$ of $X$.
Hence, the set $\internal_1\cup\{\root\}$ forms a connected subtree which we denote by $\tree_1$.  Let $P\in\tree_1$ be an internal node in $\tree_1$.
Assume it has one child in $\tree_1$, call it $X$.  Then $\costout_X = \costout_P$ and in virtue of $N_X\leq 2N_P/3$ we have
$\costout_P^{3/2}/N_P \leq (2/3)^{3/2} \costout_X^{3/2}/N_X$.  Now assume $P$ has two children $X,Y\in\tree_1$.  Then $\costout_X+\costout_Y=\costout_P$.  Using elementary calculus, we also have that $\costout_P^{3/2}/N_P \leq (\costout_X^{3/2}/N_X + \costout_Y^{3/2}/N_Y)/\sqrt{2}$ (indeed, the extreme case occurs for $N_X=N_Y=N_P/2$ and $\costout_X=\costout_Y = \costout_P/2$).  
We conclude that for any $P$ internal in $\tree_1$, the  corresponding contribution $\costout_P^{3/2}/N_P$ to $G$ is geometrically dominated by that
of its children in $\internal_1$.  Hence the entire sum $G_1 = \sum_{X\in\internal_1\cup\{\root\}} \costout_X^{3/2}/N_X$ is bounded by   $c_4\sum_{X\in\leaf_1} \costout_X^{3/2}/N_X$ for some constant $c_4$, where $\leaf_1$ is the set of leaves of $\tree_1$.
For each such leaf $X\in\leaf_1$, we have that $\costout_X^{3/2}/N_X \leq c_2^{3/2}\eps \costout_X$ (using $(A2)$), hence
$\sum_{X\in\leaf_1}\costout_X^{3/2}/N_X \leq \sum_{X\in\leaf_1} c_2^{3/2}\eps\costout_X \leq c_2^{3/2}\eps \costout_R$ (the
rightmost inequality in the chain follows from $\{V_X\}_{X\in\leaf_1}$ forming a disjoint cover of $V=V_\root$, together with $(A_1)$).  We conclude
that $G_1\leq c_4c_2^{3/2}\eps\costout_R$.

To conclude (\ref{eq:Gbound}), it remains to show that $G_2=G-G_1=\sum_{X\in\internal_2}$.  For $P\in G_2$, clearly $\costout_P^{3/2}/N_P = c_2^{3/2} \eps^3 N_P^3$.  Hence, if $X,Y\in G_2$ are children of $P$ in $\internal_2$ then $\costout_P^{3/2}/N_P \geq c_5 \costout_X^{3/2}/N_X + \costout_Y^{3/2}/N_Y$ and if  $X$ is the unique child of $P$ in $\internal_2$, then $\costout_P^{3/2}/N_P \geq c_5 \costout_X^{3/2}/N_X$, for some
global $c_5>1$.   In other words, the contribution to $G_2$ corresponding to $P$ geometrically dominates the sum of the corresponding contributions of its children.
We conclude that $G_2$ is at most some constant $c_6$ times $\sum_{X\in \operatorname{root}(\internal_2)} \costout_X^{3/2}/N_X$, where $\operatorname{root}(\internal_2)$ is the set of roots of the forrest induced by $\internal_2$.  As before, it is clear that  $\{V_X\}_{X\in\operatorname{root}(\internal_2)}$ is a disjoint collection, hence as before we conclude that $G_2 \leq c_7 \eps\costout_R$ for some global $c_7>0$.  
The assertion (\ref{eq:Gbound}) follows, and hence  (\ref{eq:Fbound}).

\noindent
Plugging our bounds in (\ref{eq:difficult}), we conclude that

$$ E\left [ \sum_{X\in \internal} \beta_X\right ] \leq C^*_\root(1+O(\eps)) - \sum_{X\in \leaf} E[C^*_X]\ .$$
Clearly $C(\hat \pi, V, W) = \sum_{X\in\internal} \beta_X + \sum_{X\in\leaf} C^*_X$. 
Hence $E[C(\hat \pi, V, W)] = (1+O(\eps))C_\root^* = (1+O(\eps))C^*$.
 We conclude the desired assertion on expectation.
%(bound on $C(\hat \pi, V, W)$) and on the LHS of ~\ref{eq:globalorder}), on expectation.
%
%We conclude that $C(\hat \pi, V, W) \leq C(\pi^*, V, W)(1+O(\eps))$, while additionally getting the premise of Lemma~\ref{lem:sufficientforgood} from the partition induced by the leaves of the recursion tree.

%The probability bound needed for Assumption~\ref{assmp:events}, together with a simple counting of accesses to $W$ gives our main result, Theorem~\ref{thm:main}, as a  simple corollary.

\noindent
A simple counting of accesses to $W$ proves Theorem~\ref{thm:main}.

\section{Using Our Decomposition as a Preconditioner for SVM}\label{sec:svm}

We consider the following practical scenario, which is can be viewed as an  improvement over a version of the  well known SVMrank \cite{Joachims:2002:OSE:775047.775067,Peters:1994:FMA} for the preference label scenario. 
%our techniques.

Consider the setting developed in Section~\ref{sec:learning}, where each element $u$ in $V$ is endowed with
a feature vector $\myphi(u)\in \R^d$ for some $d$ (we can also use infinite dimensiona spaces via kernels, but the
effective dimension is never more than $n=|V|$).  Assume, additionally, that $\|\phi(u)\|_2\leq 1$ for all $u\in V$ (otherwise,
normalize).
 Our hypothesis class $\H$ is parametrized by a weight vector $w\in \R^d$,
and each associated permutation $\pi_w$ is obtained by sorting the elements of $V$ in decreasing order of a score
given by $\score_w(u) = \langle \myphi(u), w \rangle$.  In other words, $u \prec_{\pi_w} v$ if $\score_w(u) > \score_w(v)$
(in case of ties, assume any arbitrary tie breaking scheme).

The following SVM formulaion is a convex relaxation for the problem of optimizing $C(h, V, W)$ over our chosen concept class $\H$:
\begin{eqnarray*}
\mbox{(SVM1)\ \ \ \ \ \ \ \ \ \ \ \ \ \ \ \ \ \ \ \ \ \ \ \  minimize }  & &F_1(w,\xi) =\sum_{u,v} \xi_{u,v} \\
\mbox{s.t. } \forall u,v: W(u,v)=1      & & \score_w(u)-\score_w(v) \geq 1-\xi_{u,v} \\
\forall u,v & & \xi_{u,v} \geq 0 \\
& & \|w\| \leq c \\
\end{eqnarray*}

\noindent
Instead of optimizing (SVM1) directly, we make the following
observation.  An $\eps$-good decomposition $V_1,\dots, V_k$
gives rise to a surrogate learning problem over $\Pi(V_1,\dots, V_k)\subseteq \Pi(V)$, such that optimizing over the restricted set does not compromise optimality over $\Pi(V)$ by more
than a relative  regret of $\eps$ (property (\ref{eq:globalorder})).  In turn, optimizing
over $\Pi(V_1,\dots, V_k)$ can be done separately for
each block $V_i$.  A natural underlying SVM corresponding
to this idea is captured as follows:
\begin{eqnarray*}
\mbox{(SVM2)\ \ \ \ \ \ \ \ \ \ \ \ \ \ \ \ \ \ \ \ \ \ \ \  minimize }  & &F_2(w,\xi) = \sum_{u,v\in \Delta_1\cup \Delta_2} \xi_{u,v} \\
\mbox{s.t. } \forall (u,v)\in \Delta_1\cup\Delta_2    & & \score_w(u)-\score_w(v) \geq 1-\xi_{u,v} \\
\forall u,v & & \xi_{u,v} \geq 0 \\
& & \|w\| \leq c\ , \\
\end{eqnarray*}

% Lemma~\ref{lem:whygoodisgood} that an $\eps$-good
%decomposition gives rise to a surrogate learning problem
%with a loss that we denoted by $\tilde C$ 
%involving blocks $V_1,\dots, V_k$ of high cost.  
%Now consider an $\eps$-good decomposition $V_1,\dots, V_k$, and the following corresponding SVM:
where $\Delta_1 = \bigcup_{1\leq i<j\leq k} V_i \times V_j$ and $\Delta_2 = \bigcup_{i=1}^k\{(u,v): u,v\in V_i \wedge W(u,v)=1\}$.

Abusing notation, for $w\in \R^d$ s.t. $\|w\|\leq c$, let $F_1(w)$ denote $\min F_1(w, \xi)$, where the minimum is taken over all
$\xi$ that satisfy the constraints of SVM1.   Observe that
$F_1(w)$ is simply $F_1(w, \xi)$, where $\xi$ is taken as:
\begin{equation}\label{eq:xi1}\xi_{u,v} = \begin{cases} \max\{0, 1-\score_w(u)+\score_w(v)\} & W(u,v)=1 \\ 0 & \mbox{otherwise} \end{cases} \ . \end{equation}

\noindent
Similarly define $F_2(w)$ as the minimizer of $F_2(w,\xi)$,
which is obtained by setting:
\begin{equation}\label{eq:xi2}\xi_{u,v} = \begin{cases} \max\{0, 1-\score_w(u)+\score_w(v)\} & (u,v)\in \Delta_1\cup \Delta_2 \\ 0 & \mbox{otherwise} \end{cases} \ . \end{equation}

%First, observe that for all $u,v$ we can assume $\xi^*_{u,v} \leq 1+2c$ and also $\xi'_{u,v} \leq 1+2c$.

\noindent
Let  $\pi^*$ denote the optimal solution to MFAST on $V,W$.
%\begin{lem}\label{lem:svm12}
%For any $w$ s.t. $\|w\|\leq c$, 
%$$ |F_1(w) - F_2(w)| \leq (2+4c)\eps C(\pi^*, V, W)\ .$$

%
%Let $w^*, \xi^*$ be the optimal solution for SVM1, and let $\hat w, \hat \xi$ be the optimal solution for SVM2.  Let $X_1^*$ denote
%the value of SVM1 at $w^*, \xi^*$, namely, $X_1^* = \sum_{u,v} \xi_{u,v}^*$.  Similarly, let $\hat X_2$ denote the value of
%SVM2 at $\hat w, \hat \xi$.  The following holds:
%\begin{itemize}
%\item There exists a vector $(\xi'_{u,v})_{u,v\in V}$ such that $(w^*, \xi')$ is a feasible solution to SVM1 with value $X'_1 = \sum \xi'_{u,v}$ satisfying:
%$$ | X'_1 - X_1| \leq \eps C(\pi^*, V, W)\ .$$
%\item There exists a vector $(\xi^\dagger_{u,v})_{u,v\in V}$ such that $(\hat w, \xi^\dagger)$ is a feasible solution to SVM2 with value $X^\dagger_2 = \sum \xi^\dagger_{u,v}$ satisfying:
%$$ | X^\dagger_2 - X_2| \leq \eps C(\pi^*, V, W)\ .$$
%\end{itemize}
%\end{lem}
%\begin{proof}
%Let $J\subseteq V\times V$ denote the set of pairs $(u,v)$ such that $W(u,v)=1$ but $v\in V_i, u\in V_j$ for some $i<j$. By property (\ref{eq:globalorder}) of $\eps$-goodness, the cardinality of $J$ is at most $\eps C(\pi^*, V, W)$.
%Finally, note that the difference beteen the definition of $\xi$ in 
%(\ref{eq:xi1}) and (\ref{eq:xi2}) at any $u,v$ can be at most $2+2c$.  The result follows.
%\end{proof}
%\noindent
%The theorem clearly implies that $|X_1^* - \hat X_2| \leq \eps C(\pi^*, V, W)(1+2c)$, 
%We conclude that the
% SVM naturally associated with the decomposition
% approximates the original one to within an $\eps$-error, 
%relative to the optimal MFAST cost. 
We do not know how to directly relate the optimal solution
to SVM1 and that of SVM2.  However, we can
 showwe can replace SVM2 with   a careful
sampling of constraints thereof, such that (i) the solution
to the subsampled SVM is optimal to within a relative
error of $\eps$ as a solution to SVM2, and (ii) the sampling
is such that only $O(n\polylog(n,\eps^{-1}))$ queries
to $W$ are necessary  in order to construct it.   
This result, which we quantify in what
follows, strongly relies on the local chaos property of the $\eps$-good decomposition (\ref{eq:localchaos}) and some combinatorics on permutations.

%properties of permutation.
%Of course we cannot build SVM2 without querying $W$ at all $\sum_{i=1}^k {n_i \choose 2}$
%locations corresponding to $\bigcup {V_i\choose 2}$ ($n_i=|V_i|$). 

Our subsampled SVM which we denote by SVM3, is obtained as follows.  For ease of notation we assume that all blocks $V_1,\dots, V_k$ are big in $V$, otherwise a simple accounting of small blocks needs to be taken care of, adding notational clutter.  Let $\Delta_3$ be  a subsample of size $M$ (chosen shortly) of $\Delta_2$, each element chosen
uniformly at random from $\Delta_2$ (with repetitions - hence $\Delta_3$ is a multi-set).  Define:
\begin{eqnarray*}
\mbox{(SVM3)\ \ \ \ \ \ \ \ \ \ \ \ \ \ \ \ \ \ \ \ \ \ \ \  minimize }  & &F_3(w,\xi) = \sum_{u,v \in \Delta_1} \xi_{u,v} + \frac {{\sum_{i=1}^k {n_i\choose 2}}}{M}\sum_{u,v \in \Delta_3} \xi_{u,v}\\
\mbox{s.t. } \forall (u,v)\in \Delta_1\cup \Delta_3     & & \score_w(u)-\score_w(v) \geq 1-\xi_{u,v} \\
\forall u,v & & \xi_{u,v} \geq 0 \\
& & \|w\| \leq c \\
\end{eqnarray*}

As before, define $F_3(w)$ to be $F_3(w, \xi)$, where $\xi=\xi(w)$ is the minimizer of $F_3(w,\cdot)$ and is taken as
\begin{equation}\label{eq:xi3}\xi_{u,v} = \begin{cases} \max\{0, 1-\score_w(u)+\score_w(v)\} & (u,v)\in \Delta_1\cup \Delta_3 \\ 0 & \mbox{otherwise} \end{cases} \ . \end{equation}

 Our ultimate goal is to show
that for  quite  small $M$, SVM3 is a good approximation of SVM2.   To that end we first need another lemma.

% The following lemma tells that the constraints
%in SVM1 (and SVM2) corresponding to pairs within blocks must force a high contribution to the objective function.  This will allow us to establish a theorem guaranteeing that sampling a sparse set of constraints will ensure that the corresponding sparse SVM will be equivalent to SVM2, up to an additive $O(\eps C(\pi^*, V, W))$.  
\begin{lem}\label{lem:highcost}
Any feasible solution $(w, \xi)$ for SVM1 satisfies $\sum_{u,v} \xi_{u,v} \geq C(\pi^*, V, W)$.
\end{lem}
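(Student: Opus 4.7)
The plan is to argue that any feasible $(w,\xi)$ already encodes a permutation $\pi_w$ (defined as in Section~\ref{sec:svm}, by sorting on $\score_w$), and that every pairwise disagreement of $\pi_w$ with $W$ forces a unit of slack. Since $\pi_w \in \Pi(V)$ is a candidate permutation, its MFAST cost is bounded below by $C(\pi^*, V, W)$, which yields the claim.

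In more detail, first I would fix a feasible $(w,\xi)$ and consider the induced permutation $\pi_w$. For any ordered pair $(u,v)$ with $W(u,v)=1$ that is ``inverted'' by $\pi_w$, i.e.\ $v \prec_{\pi_w} u$, the definition of $\pi_w$ gives $\score_w(u) \le \score_w(v)$ (the tie-breaking convention in Section~\ref{sec:learning} handles equalities consistently). The SVM1 margin constraint $\score_w(u)-\score_w(v) \geq 1 - \xi_{u,v}$ then forces $\xi_{u,v} \geq 1 + \score_w(v) - \score_w(u) \geq 1$. Non-negativity of all other $\xi_{u,v}$ is built into the constraints.

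Next I would sum:
\[
 \sum_{u,v} \xi_{u,v} \;\geq\; \sum_{\substack{(u,v):\,W(u,v)=1 \\ v \prec_{\pi_w} u}} \xi_{u,v} \;\geq\; \bigl|\{(u,v)\colon W(u,v)=1,\ v\prec_{\pi_w} u\}\bigr| \;=\; C(\pi_w, V, W).
\]
Finally, by definition $\pi^*$ minimizes $C(\cdot, V, W)$ over $\Pi(V)$, so $C(\pi_w, V, W) \geq C(\pi^*, V, W)$, and the lemma follows.

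The proof is essentially bookkeeping; there is no real obstacle. The only subtlety worth stating cleanly is the tie-breaking case where $\score_w(u)=\score_w(v)$ for a pair $(u,v)$ with $W(u,v)=1$ that ends up inverted in $\pi_w$ via the arbitrary order on $V$: even there the constraint still forces $\xi_{u,v}\geq 1$, so the argument goes through unchanged.
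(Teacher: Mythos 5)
Your proof is correct, and it takes a genuinely different and simpler route than the paper. The paper lower-bounds $\sum_{u,v}\xi_{u,v}$ via a fractional packing of non-transitive triangles: for each triangle $t=(u,v,y)$ with $W(u,v)=W(v,y)=W(y,u)=1$, the telescoping identity $(\score_w(u)-\score_w(v))+(\score_w(v)-\score_w(y))+(\score_w(y)-\score_w(u))=0$ forces $\xi_{u,v}+\xi_{v,y}+\xi_{y,u}\geq 3$, and then a maximum-weight triangle packing $\{\beta^*_t\}$ (borrowed from \cite{DBLP:journals/jacm/AilonCN08}) with $\sum\beta^*_t\geq C(\pi^*,V,W)/3$ converts these local inequalities into the global bound. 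You instead round $w$ to the permutation $\pi_w$ it induces, observe that every pair counted by $C(\pi_w,V,W)$ has $\score_w(u)-\score_w(v)\leq 0$ and hence $\xi_{u,v}\geq 1$, and conclude $\sum\xi_{u,v}\geq C(\pi_w,V,W)\geq C(\pi^*,V,W)$. This is more elementary (no LP-duality machinery, no appeal to an external packing result), slightly stronger (the lower bound is $C(\pi_w,V,W)$, not just the optimum), and still goes through verbatim block-by-block as the paper later needs in Theorem~\ref{thm:svmsample}. The paper's argument has the advantage of not invoking the rounding map $w\mapsto\pi_w$ at all, and of making explicit the connection to the triangle-packing LP relaxation that underlies the QuickSort approximation guarantee; but for proving this particular lemma your direct argument suffices and is cleaner.
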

\begin{proof}
The following has been proven in  \cite{DBLP:journals/jacm/AilonCN08}:
Consider  \emph{non-transitive} triangles induced by $W$:  These are triplets $(u,v,y)$ of elements in $V$
such that $W(u,v)=W(v,y)=W(y,u)=1$.  Note that any permutation must disagree with at least one pair of elements contained
in a non-transitive triangle. Let $T$ denote the set of non-transitive triangles.  Now consider an assignment
of non-negative weights $\beta_t$ for each $t\in T$.  We say that the weight system $\{\beta_t\}_{t\in T}$ \emph{packs} $T$
 if for all $u,v\in V$ such that $W(u,v)=1$, the sum  $\sum_{(u,v) \mbox{ in } t} \beta_t$ is at most $1$.  (By \emph{$u,v$ in $t$} we mean that $u,v$ are two of the three elements inducing $t$.)  Let $\{\beta^*_t\}_{t\in T}$ be a weight system packing
$T$ with the maximum possible value of the sum of weights.  Then \begin{equation}\label{eq:subbeta} \sum_{t\in T} \beta^*_t\geq C(\pi^*, V, W)/3\ .\end{equation}

Now consider one non-transitive triangle $t=(u,v,y)\in T$.  We lower bound $\xi_{u,v}+\xi_{v,y}+\xi_{y,u}$
for any $\xi$ such that $w,\xi$ is a feasible solution to SVM1.  Letting $a=\score_w(u)-\score_w(v), b=\score_w(v)-\score_w(y), 
c=\score_w(y)-\score_w(u)$, we get from the constraints in SVM1 that $\xi_{u,v} \geq 1-a, \xi_{v,y}\geq 1-b, \xi_{y,u}\geq 1-c$.  But clearly $a+b+c=0$, hence \begin{equation}\label{eq:3lowerbound}\xi_{u,v}+\xi_{v,y}+\xi_{y,u}\geq 3\ .\end{equation}   Now notice that the objective function of SVM1 can be bounded from below as follows:
\begin{eqnarray*}
\sum_{u,v} \xi_{u,v} &\geq& \sum_{t=(u,v,y)\in T} \beta^*_t (\xi_{u,v}+\xi_{v,y}+\xi_{y,u}) \\
&\geq& \sum_{t=(u,v,y)\in T} \beta^*_t \cdot 3 \\
&\geq& C(\pi^*, V, W)\ .
\end{eqnarray*}
(The first inequality was due to the fact that $\{\beta^*_t\}_{t\in T}$ is a packing of the non-transitive triangles,
hence the total weight corresponding to each pair $u,v$ is at most $1$.  The second inequality is from (\ref{eq:3lowerbound}) and
the third is from (\ref{eq:subbeta}).)  This concludes the proof.
\end{proof}

\begin{thm}\label{thm:svmsample}
Let $\eps \in (0,1)$ and $M=O(\eps^{-6}(1+2c)^2d \log(1/\eps))$.  Then with
high constant probability, for all $w$ such that $\|w\|\leq c$,
$$ |F_3(w) - F_2(w)| = O(\eps F_2(w))\ .$$
\end{thm}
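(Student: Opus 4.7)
The plan is to reduce the theorem to a uniform concentration statement for the sampled block-internal loss, and then combine it with a local lower bound on $F_2(w)$ coming from the local-chaos property of the $\eps$-good decomposition. Observe first that for every fixed $w$, the first summand $\sum_{(u,v)\in\Delta_1}\xi_{u,v}(w)$ appears identically in $F_2(w)$ and in $F_3(w)$, so it suffices to control
\[
D(w) \;:=\; \frac{\sum_{i=1}^k \binom{n_i}{2}}{M}\sum_{(u,v)\in \Delta_3}\xi_{u,v}(w) \;-\; \sum_{(u,v)\in \Delta_2}\xi_{u,v}(w).
\]
Since $|\Delta_2|=\sum_i\binom{n_i}{2}$ (tournament condition) and $\Delta_3$ is drawn uniformly with replacement from $\Delta_2$, we have $\mathbb{E}[D(w)]=0$ for each fixed $w$.

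Next, I would establish a lower bound $F_2(w) \geq \eps^2 \sum_i \binom{n_i}{2}$ valid for every feasible $w$. The argument is exactly that of Lemma~\ref{lem:highcost}, but restricted to each block $V_i$: for $(u,v)\in \Delta_2$ with both endpoints in $V_i$, the pair $(w,\xi|_{V_i})$ with $\xi$ as in~(\ref{eq:xi2}) is a feasible solution to SVM1 \emph{restricted to} $(V_i, W|_{V_i})$, so the same non-transitive-triangle packing argument yields $\sum_{(u,v)\in\Delta_2, u,v\in V_i}\xi_{u,v}(w) \geq C(\pi^*_i, V_i, W|_{V_i})$, where $\pi^*_i$ is the MFAST optimum inside $V_i$. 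Summing over $i$ and invoking the local-chaos condition~(\ref{eq:localchaos}) of the $\eps$-good decomposition produces the lower bound $F_2(w)\geq \eps^2 \sum_i \binom{n_i}{2}$ (treating small blocks separately, whose contribution can be absorbed without loss, as noted before the statement). This is the mechanism that converts an additive concentration bound into the required relative one.

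For a fixed $w$ with $\|w\|\leq c$, each hinge loss $\xi_{u,v}(w)$ lies in $[0, 1+2c]$ (since $\|\myphi(\cdot)\|\leq 1$). A standard Hoeffding bound on the average over the $M$ i.i.d.\ draws then yields, with probability $\geq 1-\delta$,
\[
 |D(w)| \;=\; O\!\left( (1+2c)\,\Big(\sum_i\binom{n_i}{2}\Big)\,\sqrt{\tfrac{\log \delta^{-1}}{M}}\right).
\]
Combined with the lower bound from the previous paragraph, taking $M = \Omega(\eps^{-6}(1+2c)^2 \log \delta^{-1})$ gives $|D(w)|\leq \eps F_2(w)$ for this particular $w$.

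The remaining and main technical step is to promote this pointwise bound to a \emph{uniform} statement over the ball $B_c=\{w:\|w\|\leq c\}$. For this I would use an $\eta$-net argument: since $\xi_{u,v}(\cdot)$ is $\|\myphi(u)-\myphi(v)\|$-Lipschitz in $w$, hence $2$-Lipschitz, the function $D(w)$ is $O(\sum_i\binom{n_i}{2})$-Lipschitz in $w$ (uniformly in the sample). A standard volume estimate produces an $\eta$-net $\mathcal N_\eta \subset B_c$ of cardinality $(3c/\eta)^d$; applying Hoeffding with a union bound over $\mathcal N_\eta$ controls $D(\cdot)$ on the net, and the Lipschitz property transfers the bound to all of $B_c$ with an extra additive error $O(\eta \sum_i\binom{n_i}{2})$. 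Choosing $\eta = \Theta(\eps^3)$ makes this transfer error comparable to the sampling error, and forces
\[
 M \;=\; O\!\left(\eps^{-6}(1+2c)^2 \big(d\log(c/\eps) + \log\delta^{-1}\big)\right),
\]
which matches the $M=O(\eps^{-6}(1+2c)^2 d\log(1/\eps))$ stated in the theorem (with $\delta$ fixed to a small constant, and $c$ absorbed into the logarithm's argument). The main obstacle in executing this plan is the uniform-over-$w$ extension: one must verify that the Lipschitz constant used in the net transfer is compatible with the \emph{relative} tolerance $\eps F_2(w)$ on every $w$, which is precisely why the local-chaos-based lower bound on $F_2(w)$ must hold uniformly and not merely at a single $w$.
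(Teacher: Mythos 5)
Your proposal is correct and follows essentially the same route as the paper: Hoeffding concentration for each fixed $w$, an $\eta$-net of radius $\Theta(\eps^3)$ over the ball $\{w:\|w\|\leq c\}$ with a Lipschitz transfer, and then conversion of the resulting additive error $O(\eps^3\sum_i\binom{n_i}{2})$ into a relative error by chaining the per-block application of Lemma~\ref{lem:highcost} with the local-chaos condition~(\ref{eq:localchaos}). The only (cosmetic) difference is that you isolate $D(w)=F_3(w)-F_2(w)$ up front so that the $\Delta_1$ terms cancel before the net argument, whereas the paper applies the net bounds to $F_2$ and $F_3$ individually; your framing is slightly cleaner but mathematically equivalent.
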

\begin{proof}
Let $B_d(c) = \{z\in \R^d: \|z\|\leq c\}$.
Fix a vector $w\in B_d(c)$.
%, and let $\{\xi\}_{u,v\in V}$, $\{\xi'\}_{u,v\in V}$ be defined as follows:
%
%\begin{equation}\label{eq:xi}\xi_{u,v} = \begin{cases} \max\{0, 1-\score_w(u)+\score_w(v)\} & (u,v) \in \Delta_1\cup \Delta_2 \\ 0 & \mbox{otherwise} \end{cases} \ \ \ \ \xi'_{u,v} = \begin{cases} \xi_{u,v} & (u,v) \in \Delta_1\cup \Delta_3 \\ 0 & \mbox{otherwise} \end{cases}\ . \end{equation}
%Namely, $\xi$ is the minimizer of $F_2(w,\xi)$ for our fixed $w$ subject to the constraints of SVM2, and $\xi'$ is
%the minimizer of $F_3(w,\xi)$ subject to the constraints of SVM3.  
%Abusing notation, we define $F_2(w)$ to be $F_2(w,\xi)$ where $\xi=\xi(w)$ is defined above, and similarly $F_3(w) = F_3(w,\xi')$, where $\xi'=\xi'(w)$ is as defined above.
%Fix a feasible solution $(w,\xi)$ to SVM2. 
 Over the random choice of $\Delta_3$, it is clear that $E[F_3(w)] = F_2(w)$.
We need a strong concentration bound.  From the observation that $|\xi_{u,v}|\leq 1+2c$ for all $u,v$, we conclude
(using Hoeffding bound) that for all $\mu>0$,
\begin{equation}\label{eq:hoeffding2}
\Pr[|F_3(w) - F_2(w)| \geq \mu] \leq \exp\left \{\frac {-\mu^2M}{\left(\sum_{i=1}^k{n_i\choose 2}(1+2c)\right)^2}\right \}\ .
\end{equation}
Let $\eta=\eps^3$ and  consider an $\eta$-net of vectors $w$ in the ball $B_d(c)$.  By this we mean a subset $\Gamma\subseteq B_d(c)$ such that for all $z\in B_d(c)$ there
exists $w\in \Gamma$ s.t. $\|z-w\|\leq \eta$.  Standard volumetric arguments imply that there exists such a set $\Gamma$ of cardinality at most $(c/\eta)^d$.

Let $z\in \Gamma$ and $w\in B_d(c)$ such that $\|w-z\|\leq \eta$.  From the definition of $F_2,F_3$,
it is clear that 
\begin{equation}\label{eq:epsnet}|F_2(w) - F_2(z)| \leq \sum_{i=1}^k{n_i\choose 2}\eps^3 ,\ \ \ |F_3(w)-F_3(z)| \leq \sum_{i=1}^k{n_i\choose 2}\eps^3 \ .\end{equation}

Using (\ref{eq:hoeffding2}), we conclude that for any $\mu>0$, by taking $M=
O(\mu^{-2}(\sum {n_i\choose 2})^2(1+2c)^2d \log(c\eta^{-1}))$,
with constant probability over the choice of $\Delta_3$,
uniformly for all $z\in \Gamma$:
$$ |F_3(z) - F_2(z)| \leq \mu\ .$$
Take $\mu = \eps^3 \sum_{i=1}^k {n_i\choose 2}$.  We conclude (plugging in our choice of $\mu$ and the definition of $\eta$) that by choosing $$M=O(\eps^{-6}(1+2c)^2d\log(c/\eps))\ ,$$ with constant probability, uniformly for all $z\in \Gamma$:
\begin{equation}\label{eq:abc} |F_3(z) - F_2(z)| \leq \eps^3\sum_{i=1}^k{n_i\choose 2}\ .\end{equation}
Using (\ref{eq:epsnet}) and the triangle inequality, we conclude that for all $w\in B_d(c)$,
\begin{equation}\label{eq:abc} |F_3(w) - F_2(w)| \leq 3 \eps^3\sum_{i=1}^k{n_i\choose 2}\ .\end{equation}
By property (\ref{eq:localchaos}) of the $\eps$-goodness
definition, (\ref{eq:abc}) imples  $$|F_3(w)-F_2(w)| \leq 3\eps \min_{\pi\in \Pi(V)}\sum_{i=1}^k C(\pi_{|V_i}, V_i, W_{|V_i}) = 3\eps \sum_{i=1}^k \min_{\sigma\in \Pi(V_i)}C(\sigma, V_i, W_{|V_i})\ .$$
By Lemma~\ref{lem:highcost} applied separately in each block $V_i$, this implies
$$ |F_3(w)-F_2(w)| \leq 3\eps \sum_{i=1}^k \sum_{u,v\in V_i} \xi_{u,v} = 3\eps F_2(w)  ,$$
(where $\xi=\xi(w)$ is as defined in (\ref{eq:xi2}).)
This concludes the proof.

%Using the triangle inequality and Lemma~\ref{lem:svm12} we conclude:
%\begin{eqnarray*} 
 %|F_3(w) - F_1(w)| &\leq& |F_3(w) - F_2(w)|  + |F_2(w) - F_1(w)| \\
%&\leq& 3\eps F_2(w) + \eps C(\pi^*, V, W)(2+4c)\\
%&\leq& 3\eps(|F_2(w)-F_1(w)| + F_1(w)) +\eps C(\pi^*, V, W) (2+4c)\\
%&\leq& 3\eps^2 C(\pi^*, V, W)(2+4c)+ 2\eps F_1(w) + \eps C(\pi^*, V, W)(2+4c) \\
%&\leq& 6\eps C(\pi^*, V, W)(1+2c) + 2\eps F_1(w)\ .
%\end{eqnarray*}
%Using Lemma~\ref{lem:highcost} again, we conclude that
%$$ |F_3(w) - F_1(w)| \leq 7\eps(1+2c) F_1(w)\ ,$$
%as required.
\end{proof}

% ----------------------------------------------------------------
\section*{Acknowledgements}
The author gratefully acknowledges the help of Warren Schudy with derivation of some of the bounds in this work.

\bibliographystyle{amsplain}
\bibliography{active_ranking}

\begin{appendix}

\section{Linear VC Bound of Permutation Set}\label{appendix:vcbound}
To see why the VC dimension of the set of permutations viewed as binary function over the set of all possible ${n\choose 2}$ preferences, it is enough
to show that any collection of $n$ pairs of   elements cannot be \emph{shattered} by the set of permutation. (Refer to the definition of VC dimension \cite{vapnik:264}
for a definition of shattering).
 Indeed, any such collection must  contain a cycle, and the set of permutations cannot direct a cycle cyclically.
\end{appendix}
%\end{spacing}
\end{document}